\documentclass[11pt]{article}%
\newcommand{\remove}[1]{}

\usepackage{hyperref}  
\usepackage{graphicx}
\usepackage{amsmath,amssymb,latexsym}
\usepackage[in]{fullpage}%
\usepackage{amstext}
\usepackage{scalerel}
\usepackage{mleftright}%
\usepackage{xspace}%
\usepackage{euscript}%
\usepackage[cmyk]{xcolor}%

\hypersetup{%
      breaklinks,%
      ocgcolorlinks, colorlinks=true,%
      urlcolor=[rgb]{0.25,0.0,0.0},%
      linkcolor=[rgb]{0.5,0.0,0.0},%
      citecolor=[rgb]{0,0.2,0.445},%
      filecolor=[rgb]{0,0,0.4},
      anchorcolor=[rgb]={0.0,0.1,0.2}%
}

\newcommand{\nfrac}[2]{{#1}/{#2}}

\newcommand{\etal}{\textit{et~al.}\xspace}
\newcommand{\MdPrc}{\nu}
\newcommand{\MdOpt}{\nu_{\mathrm{opt}}}
\newcommand{\MdOptD}{\nu^D_{\mathrm{opt}}}
\newcommand{\RadMdOpt}{\mathrm{R^{\nu}_{opt}}}

\newcommand{\SqPrc}{\mu}
\newcommand{\SqOpt}{\mu_{\mathrm{opt}}}
\newcommand{\SqOptD}{\mu^D_{\mathrm{opt}}}
\newcommand{\RadSqOpt}{\mathrm{R^{\mu}_{opt}}}

\newcommand{\Dist}{\mathbf{d}\hspace{-1pt}}

\newcommand{\CSet}{A}
\newcommand{\CSetB}{B}
\newcommand{\CenSet}{\mathcal{D}}

\newcommand{\Copt}{C_{\mathrm{opt}}}

\newcommand{\constA}{\gamma}
\newcommand{\ball}{{b}}
\newcommand{\Pbad}{P_{\mathrm{bad}}}

\newcommand{\GridSqr}{Q}
\newcommand{\Coreset}{{\mathcal{S}}}
\newcommand{\BadBallSize}{\eta}
\providecommand{\polylog}{\mathrm{polylog}}%
\newcommand{\Ckr}{\varrho}
\newcommand{\CkrExp}{\exp{ [{ O \pth{ \nfrac{(1+\log
                           \nfrac{1}{\eps})}{\eps}}^{d-1}}] }}
\newcommand{\depth}{\mathrm{depth}}

\providecommand{\tildegen}{{\protect\raisebox{-0.1cm}
      {\symbol{'176}\hspace{-0.01cm}}}}
\newcommand{\atgen}{\symbol{'100}}

\newcommand{\eps}{{\varepsilon}}%
\newcommand{\pth}[2][\!]{\mleft({#2}\mright)}%
\providecommand{\SarielThanks}{}%
\renewcommand{\SarielThanks}{%
   \thanks{Department of Computer Science, DCL 2111; University of
      Illinois; 1304 West Springfield Ave.; Urbana, IL 61801; USA;
      {{\tt http://www.uiuc.edu/\tildegen{}sariel/}}; {{\tt
            sariel\atgen{}uiuc.edu}}.  Work on this paper was
      partially supported by a NSF CAREER award CCR-0132901.  }%
}%

\newcommand{\SohamThanks}{%
   \thanks{Department of Computer Science; University of
      Illinois; 1304 West Springfield Ave.; Urbana, IL 61801; USA;
      {{\tt smazumda\atgen{}uiuc.edu}}. }%
}%


\newcommand{\HLinkShort}[2]{\hyperref[#2]{#1\ref*{#2}}}
\newcommand{\HLink}[2]{\hyperref[#2]{#1~\ref*{#2}}}
\newcommand{\HLinkPage}[2]{\hyperref[#2]{#1~\ref*{#2}%
      $_\text{p\pageref{#2}}$}}
\newcommand{\HLinkPageOnly}[1]{\hyperref[#1]{Page~\refpage*{#1}%
      $_\text{p\pageref{#1}}$}}

\newcommand{\HLinkSuffix}[3]{\hyperref[#2]{#1\ref*{#2}{#3}}}
\newcommand{\HLinkPageSuffix}[3]{\hyperref[#2]{#1\ref*{#2}%
      #3$_\text{p\pageref{#2}}$}}
\providecommand{\si}[1]{#1}

\newcommand{\seclab}[1]{\label{sec:#1}}
\newcommand{\secref}[1]{\HLink{Section}{sec:#1}}

\newcommand{\corlab}[1]{\label{cor:#1}}
\newcommand{\corref}[1]{\HLink{Corollary}{cor:#1}}%

\newcommand{\apndlab}[1]{\label{apnd:#1}}
\newcommand{\apndref}[1]{\HLink{Appendix}{apnd:#1}}

\newcommand{\defrefY}[2]{\hyperref[def:#2]{#1}}

\newcommand{\lemlab}[1]{\label{lemma:#1}}
\newcommand{\lemref}[1]{\HLink{Lemma}{lemma:#1}}%

\newcommand{\tablab}[1]{\label{table:#1}}%
\newcommand{\tabref}[1]{\HLink{Table}{table:#1}}%

\newcommand{\thmlab}[1]{{\label{theo:#1}}}
\newcommand{\thmref}[1]{\HLink{Theorem}{theo:#1}}

\providecommand{\eqlab}[1]{}%
\renewcommand{\eqlab}[1]{\label{equation:#1}}


\newcommand{\tldO}{\scalerel*{\widetilde{O}}{j^2}}%
\providecommand{\Matousek}{Matou{\v s}ek\xspace}


\usepackage[amsmath,thmmarks]{ntheorem}%
\theoremseparator{.}%

\theoremstyle{plain}%
\newtheorem{theorem}{Theorem}[section]

\newtheorem{lemma}[theorem]{Lemma}

\newtheorem{corollary}[theorem]{Corollary}

\newtheorem{observation}[theorem]{Observation}

\theoremstyle{plain}%
\theoremheaderfont{\sf} \theorembodyfont{\upshape}%
\newtheorem*{remark:unnumbered}[FakeCounter]{Remark}%
\newtheorem{remark}[theorem]{Remark}%

\newtheorem*{defn:unnumbered}[FakeCounter]{Definition}
\newtheorem{defn}[theorem]{Definition}

%

\newcommand{\myqedsymbol}{\rule{2mm}{2mm}}

\theoremheaderfont{\em}%
\theorembodyfont{\upshape}%
\theoremstyle{nonumberplain}%
\theoremseparator{}%
\theoremsymbol{\myqedsymbol}%
\newtheorem{proof}{Proof:}%

%


\newcommand{\dist}[1]{\left\| {#1} \right\|}
\newcommand{\cardin}[1]{\left| {#1} \right|}%
\newcommand{\brc}[1]{\left\{ {#1} \right\}}
\newcommand{\ceil}[1]{\left\lceil {#1} \right\rceil}
\newcommand{\E}{\EuScript{E}}%

\newcommand{\Set}[2]{\left\{ #1 \;\middle\vert\; #2 \right\}}
\newcommand{\pbrc}[2][\!\!]{#1\left[ {#2} \bigr. \right]}

\newcommand{\hrefb}[3][black]{\href{#2}{\color{#1}{#3}}}%



\begin{document}

\title{Coresets for $k$-Means and $k$-Median Clustering and their
   Applications%
   \thanks{The latest version of the paper is available from the
      author webpage
      \href{http://valis.cs.uiuc.edu/~sariel/research/papers/03/kcoreset/}
      {http://valis.cs.uiuc.edu/\tildegen{}sariel/research/papers/03/kcoreset/}
      .} }

\author{Sariel Har-Peled%
   \SarielThanks{}%
   \and %
   Soham Mazumdar%
   \SohamThanks{}%
}

\date{November 7, 2003\footnote{Re\LaTeX{}ed and refs updated on October
      30, 2018.}}

\maketitle

\begin{abstract}
    In this paper, we show the existence of small coresets for the
    problems of computing $k$-median and $k$-means clustering for
    points in low dimension. In other words, we show that given a
    point set $P$ in $\Re^d$, one can compute a weighted set
    $\Coreset \subseteq P$, of size $O(k \eps^{-d} \log{n})$, such
    that one can compute the $k$-median/means clustering on $\Coreset$
    instead of on $P$, and get an $(1+\eps)$-approximation.
    
    As a result, we improve the fastest known algorithms for
    $(1+\eps)$-approximate $k$-means and $k$-median. Our
    algorithms have \emph{linear} running time for a fixed
    $k$ and $\eps$.  In addition, we can maintain the
    $(1+\eps)$-approximate $k$-median or $k$-means
    clustering of a stream when points are being \emph{only
       inserted}, using polylogarithmic space and update
    time.
\end{abstract}


\section{Introduction}

Clustering is a widely used technique in Computer Science with
applications to unsupervised learning, classification, data mining
and other fields. We study two variants of the clustering problem
in the geometric setting. The \emph{geometric $k$-median
clustering} problem is the following: Given a set $P$ of points in
$\Re^d$, compute a set of $k$ points in $\Re^d$ such that the sum
of the distances of the points in $P$ to their respective nearest
median is minimized. The $k$-means differs from the above in that
instead of the sum of distances, we minimize the sum of squares of
distances.  Interestingly the $1$-mean is the center of mass of
the points, while the $1$-median problem, also known as the
Fermat-Weber problem, has no such closed form. As such the problems
have usually been studied separately from each other even in the
approximate setting. We propose techniques which can be used for
finding approximate $k$ centers in both variants.

In the data stream model of computation, the points are read
in a sequence and we desire to compute a function,
clustering in our case, on the set of points seen so far. In
typical applications, the total volume of data is very large
and can not be stored in its entirety. Thus we usually
require a data-structure to maintain an aggregate of the
points seen so far so as to facilitate computation of the
objective function. Thus the standard complexity measures in
the data stream model are the storage cost, the update cost
on seeing a new point and the time to compute the function
from the aggregated data structure.

\begin{table}[t]
    \centerline{
    }
    \begin{tabular}{|l||l|l|r|r|r|r|}
      \hline
      Problem
      & Previous Results
      & Our Results
      \\
      \hline
      \hline
      $k$-median
      &        
        \begin{minipage}{2.5in}
            \smallskip%
            $O(\Ckr n(\log{n})\log {k})$ \cite{kr-nltas-99} ~~{(*)}\\
               $~~~~~$ $\Ckr = \CkrExp$\\
           \end{minipage}
           &
           \begin{minipage}{2.5in}
           $O \pth{ n +
              \Ckr k^{O(1)} \log^{O(1)} n }$\\
           $[$\thmref{fast:k:median}$]$
           \end{minipage}\\      
      \hline
           \begin{minipage}{0.8in}
           discrete\\ $k$-median
           \end{minipage}&
               $O(\Ckr n\log{n}\log {k})$ \cite{kr-nltas-99}
           &
             \begin{minipage}{1.8in}
                 \smallskip%
                 $O \pth{ n +
                    \Ckr k^{O(1)} \log^{O(1)} n }$\\
                 \smallskip%
                 $[$\thmref{approx:k:median:discrete}$]$
             \end{minipage}\\[0.1cm]
           \hline
           \begin{minipage}{0.8in}
               $k$-means
           \end{minipage}&
           $O_k(n (\log{n})^k \eps^{-2k^2 d})$
           \cite{m-agc-00} ~~{(**)}
           &
           \begin{minipage}{2.4in}
               \smallskip%
               $O (n +{k^{k+2}}{\eps^{-(2d+1)k}} {\log^{k+1}{n}}\log^k\frac{1}{\eps})$\\
               $[$\thmref{k:means:approx}$]$
               \smallskip%
           \end{minipage}\\
      \hline
           \begin{minipage}{0.8in}
               Streaming
           \end{minipage}
           &
           \begin{minipage}{2.4in}
               $k$-median\\
               \si{Const} factor; Any metric space\\
               $O( k \polylog )$ space ~\cite{cop-bsacp-03}
           \end{minipage}&
           \begin{minipage}{2in}
               \smallskip%
               $k$-means and $k$-median\\
               $(1+\eps)$-approx; Points in $\Re^d$. \\
               $O( k \eps^{-d} \log^{2d+2}
               n )$ space\\
               $[$\thmref{k:means:streaming}$]$

               \smallskip
           \end{minipage}\\
           \hline
       \end{tabular}%
    \caption{For our results, all the running time bounds are in
       expectation and the algorithms succeed with high
       probability. (*) Getting this running time
       requires non-trivial modifications of the algorithm
       of Kolliopoulos and Rao \cite{kr-nltas-99}. (**) The
       $O_k$ notation hides constants that depends solely on
       $k$. }
    \vspace{-0.4cm}
    \tablab{results}
\end{table}

\smallskip
\noindent{\bf $k$-median clustering.} The $k$-median problem turned
out to be nontrivial even in low dimensions and achieving a good
approximation proved to be a challenge. Motivated by the work of Arora
\cite{a-ptase-98}, which proposed a new technique for geometric
approximation algorithms, Arora, Raghavan and Rao \cite{arr-asekm-98}
presented a $O\pth{n^{O(1/\eps)+1}}$ time $(1+\eps)$-approximation
algorithm for points in the plane. This was significantly improved by
Kolliopoulos and Rao \cite{kr-nltas-99} who proposed an algorithm with
a running time of $O(\Ckr n\log{n}\log {k})$ for the discrete version
of the problem, where the medians must belong to the input set and
$\Ckr = \CkrExp$. The $k$-median problem has been studied extensively
for arbitrary metric spaces and is closely related to the
uncapacitated facility location problem.  Charikar \etal{}
\cite{cgts-cfaak-99} proposed the first constant factor approximation
to the problem for an arbitrary metric space using a natural linear
programming relaxation of the problem followed by rounding the
fractional solution. The fastest known algorithm is due to Mettu and
Plaxton \cite{mp-otbap-02} who give an algorithm which runs in
$O(n(k+\log n))$ time for small enough $k$ given the distances are
bound by $2^{O\pth{n/ \log\pth{n/k}}}$. It was observed that if the
constraint of having exactly $k$-medians is relaxed, the problem
becomes considerably easier \cite{cg-icaflkm-99,jv-pdaam-99}. In
particular, Indyk \cite{i-stams-99} proposed a constant factor
approximation algorithm which produces $O(k)$ medians in
$\tldO \pth{n k}$ time. In the streaming context, Guha \etal {}
\cite{gmmo-cds-00} propose an algorithm which uses $O(n^\eps)$ memory
to compute $2^{1/\eps}$ approximate $k$-medians.  Charikar \etal {}
\cite {cop-bsacp-03} improve the algorithm by reducing the space
requirement to $O(k \cdot \polylog(n))$.

\smallskip
\noindent{\bf $k$-means clustering.} Inaba \etal{} \cite{iki-awvdr-94}
observe that the number of Voronoi partitions of $k$ points in $\Re^d$
is $n^{k d}$ and can be done exactly in time $O(n^{k d+1})$. They also
propose approximation algorithms for the $2$-means clustering problem
with time complexity $O(n^{O(d)})$.  \si{de} la Vega \etal{}
\cite{vkkr-ascp-03} proposes a $(1+\eps)$-approximation algorithm, for
high dimensions, with running time $O(g(k,\eps) n \log^k n)$, where
$g(k,\eps) = \exp [ (k^3/\eps^8) (\ln(k/\eps)) \ln k]$ (they refer to
it as $\ell_2^2$ $k$-median clustering).  \Matousek{} \cite{m-agc-00}
proposed a $(1+\eps)$-approximation algorithm for the geometric
$k$-means problem with running time $O\pth{n\eps^{-2k^2d}\log^k{n} }$.

\paragraph{Our Results.}
We propose fast algorithms for the approximate $k$-means and
$k$-medians problems.  The central idea behind our
algorithms is computing a weighted point set which we call a
\emph{$\pth{k,\eps}$-coreset}. For an optimization problem,
a coreset is a subset of input, such that we can get a good
approximation to the original input by solving the
optimization problem directly on the coreset. As such, to
get good approximation, one needs to compute a coreset, as
small as possible from the input, and then solve the problem
on the coreset using known techniques. Coresets have been
used for geometric approximation mainly in low-dimension
\cite{ahv-aemp-04,h-cm-04,apv-aaklc-02}, although a similar
but weaker concept was also used in high dimensions
\cite{bhi-accs-02,bc-ocsb-03,hv-pchdu-02}. In low dimensions
coresets yield approximation algorithm with linear or near
linear running time with an additional term that depends
only on the size of the coreset.

In the present case, the property we desire of the
$\pth{k,\eps}$-coreset is that the clustering cost of the coreset for
any arbitrary set of $k$ centers is within $\pth{1\pm \eps}$ of the
cost of the clustering for the original input. To facilitate the
computation of the coreset, we first show a linear time algorithm (for
$k =O(n^{1/4})$), that constructs a $O( k\polylog )$ sized set of
centers such that the induced clustering gives a constant factor
approximation to both the optimal $k$-means and the optimal
$k$-medians clustering. We believe that the technique used for this
fast algorithm is of independent interest. Note that it is faster than
previous published fast algorithms for this problem (see
\cite{mp-otbap-02} and references therein), since we are willing to
use more centers.  Next, we show how to construct a suitably small
coreset from the set of approximate centers.  We compute the $k$
clusterings for the coresets using weighted variants of known
clustering algorithms. Our results are summarized in \tabref{results}.

One of the benefits of our new algorithms is that in the resulting
bounds, on the running time, the term containing `$n$' is decoupled
from the ``nasty'' exponential constants that depend on $k$ and
$1/\eps$. Those exponential constants seems to be inherent to the
clustering techniques currently known for those problems.

Our techniques extend very naturally to the streaming model
of computation. The aggregate data-structure is just a
$(k,\eps)$-coreset of the stream seen so far. The size of
the maintained coreset is $O( k\eps^{-d} \log{n})$, and the
overall space used is $O((\log^{2d+2} {n})/\eps^d )$. The
amortized time to update the data-structure on seeing a new
point is $O(k^5+\log^2(k/\eps))$.

As a side note, our ability to get linear time algorithms for fixed
$k$ and $\eps$, relies on the fact that our algorithms need to solve a
batched version of the nearest neighbor problem. In our algorithms,
the number of queries is considerably larger than the number of sites,
and the distances of interest arise from clustering. Thus, a small
additive error which is related to the total price of the clustering
is acceptable. In particular, one can build a data-structure that
answers nearest neighbor queries in $O(1)$ time per query, see
\apndref{fast:n:n}. Although this is a very restricted case, this
result may nevertheless be of independent interest, as this is the
first data-structure to offer nearest neighbor queries in constant
time, in a non-trivial settings.

The paper is organized as follows. In \secref{coresets}, we
prove the existence of coresets for $k$-median/means
clustering. In \secref{fast:const:factor}, we describe the
fast constant factor approximation algorithm which generates
more than $k$ means/medians. In \secref{eps:approx:k:median}
and \secref{eps:approx:k:mean}, we combine the results of the two preceding
sections, and present an $(1+\eps)$-approximation algorithm
for $k$-means and $k$-median respectively. In
\secref{streaming}, we show how to use coresets for space
efficient streaming. We conclude in \secref{conclusions}.

\section{Preliminaries}

\begin{defn}
    For a point set $X$, and a point $p$, both in $\Re^d$, let
    $\Dist(p,X)= \min_{x \in X} \dist{x p}$ denote the \emph{distance
       of $p$ from $X$}.
\end{defn}

\begin{defn}[Clustering]
    For a weighted point set $P$ with points from $\Re^d$, with an
    associated weight function $w:P\rightarrow \mathbb{Z}^+$ and any
    point set $C$, we define
    $\MdPrc_C\pth{P} = \sum_{p\in P} w(p)\Dist(p,C)$ as the
    \emph{price} of the $k$-median clustering provided by $C$. Further
    let
    $\MdOpt(P,k) =\min_{C\subseteq \Re^d, \cardin{C} = k}
    \MdPrc_C\pth{P}$ denote the price of the \emph{optimal $k$-median}
    clustering for $P$.

    Similarly, let $\SqPrc_C(P) = \sum_{p \in P}
    w(p)\pth{\Dist(p,C)}^2$ denote the price of the
    $k$-means clustering of $P$ as provided by the set of
    centers $C$.  Let $\SqOpt(P,k) = \min_{C \subseteq
       \Re^d, \cardin{C} = k} \SqPrc_C(P)$ denote the price
    of the \emph{optimal $k$-means clustering} of $P$.
\end{defn}

\begin{remark}
    We only consider positive integer weights. A regular
    point set $P$ may be considered as a weighted set with weight $1$
    for each point, and total weight $\cardin{P}$.
\end{remark}

\begin{defn}[Discrete Clustering]
    In several cases, it is convenient to consider the
    centers to be restricted to lie in the original point
    set. In particular, let $\MdOptD(P,k) =\min_{C\subseteq
       P, \cardin{C} = k} \MdPrc_C\pth{P}$ denote the price
    of the \emph{optimal discrete $k$-median} clustering for
    $P$ and let $\SqOptD(P,k) = \min_{C \subseteq P,
       \cardin{C} = k} \SqPrc_C(P)$ denote the price of the
    \emph{optimal discrete $k$-means} clustering of $P$.
\end{defn}

\begin{observation}
    For any point set $P$, we have $\SqOpt(P,k) \leq \SqOptD(P,k) \leq
    4 \SqOpt(P,k)$, and $\MdOpt(P,k) \leq \MdOptD(P,k) \leq 2
    \MdOpt(P,k)$.
\end{observation}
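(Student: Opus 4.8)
The plan is to prove the four inequalities one at a time; the only substance is in the two upper bounds.

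First, the two lower bounds $\SqOpt(P,k) \le \SqOptD(P,k)$ and $\MdOpt(P,k) \le \MdOptD(P,k)$ are immediate from the definitions: since $P \subseteq \Re^d$, every feasible solution to the discrete problem (a set $C \subseteq P$ with $\cardin{C} = k$) is also feasible for the unrestricted problem, so minimizing the same objective over the smaller family of candidate center sets can only yield a larger value.

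For the upper bounds, I would start from an optimal unrestricted solution and ``snap'' its centers to input points. Take $C = \{c_1, \dots, c_k\}$ realizing $\MdOpt(P,k)$ (resp.\ $\SqOpt(P,k)$), and let $P_1, \dots, P_k$ be the induced partition of $P$, where $P_i$ collects the points of $P$ whose nearest center in $C$ is $c_i$ (ties broken arbitrarily). For each nonempty $P_i$ pick $p_i \in P_i$ to be a point of $P_i$ closest to $c_i$. The key local estimate is that for every $q \in P_i$ we have $\dist{c_i p_i} \le \dist{c_i q}$ by the choice of $p_i$, so by the triangle inequality $\dist{q p_i} \le \dist{q c_i} + \dist{c_i p_i} \le 2\,\dist{q c_i} = 2\,\Dist(q,C)$. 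Summing the weighted distances over all clusters shows that the discrete center set $\{p_1,\dots,p_k\} \subseteq P$ has $k$-median price at most $2\,\MdOpt(P,k)$, giving $\MdOptD(P,k) \le 2\,\MdOpt(P,k)$. For the $k$-means case the same construction and the same bound, now squared, give $\dist{q p_i}^2 \le \pth{2\,\dist{q c_i}}^2 = 4\,\Dist(q,C)^2$, hence $\SqOptD(P,k) \le 4\,\SqOpt(P,k)$ after summation.

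I do not expect a serious obstacle here; the one place to be slightly careful is the degenerate case in which some cluster $P_i$ is empty (so there is no point to snap to) or in which two clusters would be assigned the same representative. Since $\MdOptD$ and $\SqOptD$ are only meaningful when $\cardin{P}\ge k$, in that situation I would simply fill the unused center slots with arbitrary distinct points of $P$; adding centers never increases the clustering price, so the bounds above are preserved. The only genuinely quantitative point is tracking that squaring the factor-$2$ triangle-inequality bound produces exactly the claimed factor $4$ for $k$-means.
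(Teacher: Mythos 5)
Your proof is correct, and it is the standard argument the paper has in mind: the paper states this as an unproved Observation, and the intended justification is exactly your snapping of each optimal center to the nearest input point in its cluster, with the triangle inequality giving the factor $2$ and its square giving the factor $4$. Your handling of the degenerate cases is fine (and in fact the representatives $p_i$ are automatically distinct since the clusters $P_i$ are disjoint).
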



\section{Coresets from Approximate Clustering}
\seclab{coresets}

\begin{defn}[Coreset]
    For a weighted point set $P\subseteq \Re^d$, a weighted
    set $\Coreset \subseteq \Re^d$, is a
    \emph{$\pth{k,\eps}$-coreset} of $P$ for the $k$-median
    clustering, if for any set $C$ of $k$ points in $\Re^d$,
    we have $(1-\eps)\MdPrc_C(P) \leq \MdPrc_C(\Coreset)
    \leq (1+\eps)\MdPrc_C(P)$.

    Similarly, $\Coreset$ is a \emph{$\pth{k,\eps}$-coreset}
    of $P$ for the $k$-means clustering, if for any set $C$
    of $k$ points in $\Re^d$, we have $(1-\eps)\SqPrc_C(P)
    \leq \SqPrc_C(\Coreset) \leq (1+\eps)\SqPrc_C(P)$.
\end{defn}

\subsection{Coreset for $k$-Median}

Let $P$ be a set of $n$ points in $\Re^d$, and
$\CSet = \brc{x_1,\ldots, x_m}$ be a point set, such that
$\MdPrc_\CSet(P) \leq c \MdOpt(P,k)$, where $c$ is a constant. We give
a construction for a $(k,\eps)$-coreset using $\CSet$. Note that we do
not have any restriction on the size of $\CSet$, which in subsequent
uses will be taken to be $O(k\polylog)$.

\subsubsection{The construction}
\seclab{construction}

Let $P_i$ be the points of $P$ having $x_i$ as their nearest
neighbor in $\CSet$, for $i=1,\ldots, m$. Let $R =
\MdPrc_\CSet(P)/\pth{c n}$ be a lower bound estimate of the
average radius $\RadMdOpt(P,k) = \MdOpt(P,k)/n$.  For any $p
\in P_i$, we have $\dist{p x_i} \leq c n R$, since $\dist{p
   x_i} \leq \MdPrc_\CSet(P)$, for $i=1,\ldots, m$.

Next, we construct an appropriate exponential grid around each $x_i$,
and snap the points of $P$ to those grids. Let $\GridSqr_{i,j}$ be an
axis-parallel square with side length $R 2^j$ centered at $x_i$, for
$j=0,\ldots, M$, where $M=\ceil{2\lg (c n)}$.  Next, let
$V_{i,0} = \GridSqr_{i,0}$, and let
$V_{i,j} = \GridSqr_{i,j} \setminus \GridSqr_{i,j-1}$, for
$j=1,\ldots, M$.  Partition $V_{i,j}$ into a grid with side length
$r_{j} = \eps R 2^j/(10 c d)$, and let $G_i$ denote the resulting
exponential grid for $V_{i,0},\ldots, V_{i,M}$.  Next, compute for
every point of $P_i$, the grid cell in $G_i$ that contains it.  For
every non empty grid cell, pick an arbitrary point of $P_i$ inside it
as a representative point for the coreset, and assign it a weight
equal to the number of points of $P_i$ in this grid cell. Let
$\Coreset_i$ denote the resulting weighted set, for $i=1,\ldots, m$,
and let $\Coreset = \cup_i \Coreset_i$.

Note that $\cardin{\Coreset} = O\pth{\pth{\cardin{\CSet}\log n}/
   \eps^d }$.  As for computing $\Coreset$ efficiently. Observe that
all we need is a constant factor approximation to $\MdPrc_\CSet(P)$
(i.e., we can assign a $p \in P$ to $P_i$ if $\dist{p, x_i} \leq
2\Dist(p,\CSet)$).  This can be done in a naive way in $O(n m)$ time,
which might be quite sufficient in practice. Alternatively, one can
use a data-structure that answers constant approximate
nearest-neighbor queries in $O(\log m)$ when used on $\CSet$ after
$O(m \log{m})$ preprocessing \cite{amnsw-oaann-98}.  Another option
for computing those distances between the points of $P$ and the set
$\CSet$ is using \thmref{batch:n:n} that works in $O(n + m n^{1/4}
\log{n})$ time.  Thus, for $i=1,\ldots, m$, we compute a set $P_i'$
which consists of the points of $P$ that $x_i$ (approximately) serves.
Next, we compute the exponential grids, and compute for each point of
$P_i'$ its grid cell.  This takes $O(1)$ time per point, with a
careful implementation, using hashing, the floor function and the
$\log$ function. Thus, if $m=O(\sqrt{n})$ the overall running time is
$O(n + m n^{1/4} \log{n}) = O(n)$ and $O( m\log{m} + n \log{m} + n) =
O( n\log{m})$ otherwise.


\subsubsection{Proof of Correctness}

\begin{lemma}
    \lemlab{correctness}%
    The weighted set $\Coreset$ is a $\pth{k,\eps}$-coreset for $P$
    and $\cardin{\Coreset} = O\pth{|\CSet|\eps^{-d}\log{n}}$.
\end{lemma}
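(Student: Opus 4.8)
The plan is to realize $\Coreset$ as the result of displacing every point of $P$ by a small amount, and then to argue that any such perturbation changes the $k$-median price of \emph{every} $k$-point set by a controlled amount. For $p\in P_i$, let $\pi(p)$ denote the representative chosen for the cell of $G_i$ containing $p$; since the weight of each representative is exactly the number of points mapped to it, $\MdPrc_C(\Coreset)=\sum_{p\in P}\Dist(\pi(p),C)$ for every $C$. Using the triangle inequality $\bigl|\Dist(p,C)-\Dist(q,C)\bigr|\le\dist{pq}$ for all $q$ and all $C$, and summing over $p$,
\[
  \bigl|\MdPrc_C(\Coreset)-\MdPrc_C(P)\bigr|\ \le\ \sum_{p\in P}\dist{p\,\pi(p)} .
\]
Thus it suffices to show $\sum_{p\in P}\dist{p\,\pi(p)}\le\eps\,\MdOpt(P,k)$; combined with $\MdOpt(P,k)\le\MdPrc_C(P)$ for every $C$ with $\cardin C=k$, this gives $(1-\eps)\MdPrc_C(P)\le\MdPrc_C(\Coreset)\le(1+\eps)\MdPrc_C(P)$, i.e.\ the coreset property.

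To bound the total displacement I would split the sum according to which ring $V_{i,j}$ a point lies in. A point $p\in P_i$ in the grid inside $V_{i,j}$ sits in a cell of side $r_j=\eps R 2^j/(10cd)$, so $\dist{p\,\pi(p)}\le\sqrt d\,r_j\le\eps R2^j/(10c)$. For $j\ge 1$, since $p\notin\GridSqr_{i,j-1}$ we have $\dist{p x_i}\ge R2^{j-1}/2=R2^j/4$, hence $\dist{p\,\pi(p)}\le\tfrac{2\eps}{5c}\dist{p x_i}=\tfrac{2\eps}{5c}\Dist(p,\CSet)$; summing over all points in all outer rings gives at most $\tfrac{2\eps}{5c}\MdPrc_\CSet(P)\le\tfrac{2\eps}{5}\MdOpt(P,k)$. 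For $j=0$ the cell side is $r_0=\eps R/(10cd)$, so each of the at most $n$ points there moves by at most $\eps R/(10c)$; since $cnR=\MdPrc_\CSet(P)$ this contributes at most $\tfrac{\eps}{10c}\MdPrc_\CSet(P)\le\tfrac{\eps}{10}\MdOpt(P,k)$. Adding the two cases yields $\sum_{p}\dist{p\,\pi(p)}\le\eps\,\MdOpt(P,k)$. One must also check that every $p\in P_i$ lands in some $V_{i,j}$ with $j\le M$: this holds because $\dist{p x_i}\le cnR=\MdPrc_\CSet(P)$ while $\GridSqr_{i,M}$, with $M=\ceil{2\lg(cn)}$, has half-side $\ge (cn)^2R/2$. (The approximate assignment used for efficiency, $p\in P_i'$ whenever $\dist{p x_i}\le 2\Dist(p,\CSet)$, only affects the constants above, which have slack.)

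For the size bound, each ring $V_{i,j}$ of $G_i$ is cut into at most $\left(R2^j/r_j\right)^d=(10cd/\eps)^d$ cells, and there are $M+1=O(\log(cn))=O(\log n)$ rings; hence $\cardin{\Coreset_i}=O(\eps^{-d}\log n)$ with $c,d$ treated as constants, and $\cardin{\Coreset}=O(\cardin{\CSet}\eps^{-d}\log n)$.

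The step I expect to be the main obstacle is the displacement bound, and in particular the fact that the exponential grid is designed precisely so that a point $p$ far from its nearest center $x_i$ falls into a high-index ring whose cells, although larger in absolute terms, are still small \emph{relative to} $\Dist(p,\CSet)$ — this is what makes each displacement an $O(\eps/c)$ fraction of $\Dist(p,\CSet)$ and the whole sum an $O(\eps/c)$ fraction of $\MdPrc_\CSet(P)\le c\,\MdOpt(P,k)$. The innermost ring $V_{i,0}$ is the one place where this relative bound fails, and it must be charged separately by the crude count above; the rest is just the triangle inequality and the trivial bound $\MdOpt(P,k)\le\MdPrc_C(P)$.
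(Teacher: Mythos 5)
Your proof is correct and follows essentially the same route as the paper's: bound the error by the total displacement $\sum_{p}\dist{p\,p'}$ via the triangle inequality, split the sum into the innermost ring (charged against $nR \le \MdPrc_\CSet(P)/c$) and the outer rings (where each displacement is an $O(\eps/c)$ fraction of $\Dist(p,\CSet)$), and conclude via $\MdPrc_\CSet(P) \le c\,\MdOpt(P,k) \le c\,\MdPrc_C(P)$. You additionally make explicit the coverage of the exponential grid and the per-ring cell count, which the paper leaves implicit.
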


\begin{proof}
    Let $Y$ be an arbitrary set of $k$ points in $\Re^d$.
    For any $p\in P$, let $p'$ denote the image of $p$ in
    $\Coreset$. The error is $\E =
    \cardin{\MdPrc_Y\pth{P} - \MdPrc_Y(\Coreset)} \leq
    \sum_{p\in P}\cardin{\Dist(p, Y) - \Dist(p', Y)}$.

    Observe that $\Dist(p,Y) \leq \dist{p p'} +\Dist(p',Y)$
    and $\Dist(p',Y) \leq \dist{p p'} +\Dist(p,Y)$ by the
    triangle inequality. Implying that $\cardin{\Dist(p, Y)
       - \Dist(p',Y)} \leq \dist{p p'}$.  It follows that
    \[
    \E \leq \sum_{p\in P}\dist{p p'} =
    {\!\!\!\!\!\sum_{\substack{p\in P,\\
             \Dist(p,\CSet) \leq R} } \!\!\!\!\! \dist{p p'}
    } +
    \!\!\!\!\!{\sum_{\substack{p\in P,\\
             \Dist(p,\CSet) > R} } \!\!\!\!\!  \dist{p p'} }
    \leq \frac{\eps}{10c} n R +\frac{\eps}{10c}\sum_{p\in P}
    \Dist\pth{p,\CSet} \leq \frac{2\eps}{10c}
    \MdPrc_\CSet(P) \leq \eps \MdOpt\pth{P,k},
    \]
    since $\dist{p p'} \leq \frac{\eps}{10c}\Dist(p,A)$ if
    $\Dist(p,A) \geq R$, and $\dist{p p'} \leq
    \frac{\eps}{10c} R$, if $\Dist(p,A) \leq R$, by the
    construction of the grid. This implies
    $\cardin{\MdPrc_Y\pth{P} - \MdPrc_Y(\Coreset)} \leq \eps
    \MdPrc_Y \pth{ P}$, since $\MdOpt(P,k) \leq
    \MdPrc_Y(P)$.
\end{proof}

It is easy to see that the above algorithm can be easily
extended for weighted point sets.
\begin{theorem}
    \thmlab{coreset:fast:k:median}%
    Given a point set $P$ with $n$ points, and a point set $\CSet$
    with $m$ points, such that $\MdPrc_\CSet(P) \leq c \MdOpt(P,k)$,
    where $c$ is a constant. Then, one can compute a weighted set
    $\Coreset$ which is a $(k,\eps)$-coreset for $P$, and
    $\cardin{\Coreset} = O\pth{ (\cardin{\CSet} \log{n}) /\eps^d}$.
    The running time is $O(n )$ if $m=O(\sqrt{n})$ and $O( n\log{m})$
    otherwise.

    If $P$ is weighted, with total weight $W$, then
    $\cardin{\Coreset} = O\pth{ (\cardin{\CSet} \log{W})
       /\eps^d}$.

\end{theorem}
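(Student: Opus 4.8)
The plan is to assemble three ingredients already developed in \secref{coresets} — the explicit grid construction of \secref{construction}, the error estimate of \lemref{correctness}, and the efficiency discussion accompanying the construction — and then do the small amount of extra bookkeeping needed for the weighted variant. First I would restate the construction: partition $P$ into $P_1,\dots,P_m$ by assigning each point to a constant-factor approximation of its nearest center in $\CSet$, so that $\dist{p x_i} \leq 2\Dist(p,\CSet) = O(nR)$ for $p \in P_i$, where $R = \MdPrc_\CSet(P)/(cn)$ is a lower estimate of $\RadMdOpt(P,k)$; around each $x_i$ build the exponential grid over the annuli $V_{i,0},\dots,V_{i,M}$ with $M = \ceil{2\lg(cn)}$ and level-$j$ cell side length $r_j = \eps R 2^j/(10cd)$; and from each nonempty cell keep one point of $P_i$ as a representative with weight equal to the number of points of $P_i$ in that cell. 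Counting cells gives $\cardin{\Coreset} = O\pth{(\cardin{\CSet}\log n)/\eps^d}$: each of the $m = \cardin{\CSet}$ grids has $M+1 = O(\log n)$ levels, and each $V_{i,j}$ is covered by $(R2^j/r_j)^d = (10cd/\eps)^d = O(\eps^{-d})$ cells since $c$ and $d$ are constants. That this weighted set is a $(k,\eps)$-coreset for $P$ is exactly \lemref{correctness}.

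For the running time I would quote the two assignment procedures already described. When $m = O(\sqrt n)$, use \thmref{batch:n:n} to compute an approximate nearest center for every point in $O(n + m n^{1/4}\log n) = O(n)$ time; otherwise preprocess $\CSet$ in $O(m\log m)$ time for $O(1)$-approximate nearest-neighbor queries and answer the $n$ queries at $O(\log m)$ each, for $O(m\log m + n\log m) = O(n\log m)$ total (using $m = O(n)$, which holds in all our applications). A constant-factor approximation of $\Dist(p,\CSet)$ is all that is needed: the extra slack is absorbed into the constant in $M$, which still leaves every point of $P$ inside some annulus of some grid. Once the assignment is known, locating the grid cell of a point costs $O(1)$ via hashing together with the floor and $\log$ functions, so the grid phase, the selection of representatives, and the accumulation of weights each cost $O(n)$, which yields the claimed $O(n)$ and $O(n\log m)$ bounds.

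For the weighted case I would rerun the identical construction and proof with the total weight $W = \sum_{p\in P} w(p)$ in the role of $n$: take $R = \MdPrc_\CSet(P)/(cW)$ and $M = \ceil{2\lg(cW)}$, keeping each $r_j$ unchanged, so that $2^M R \geq cWR = \MdPrc_\CSet(P) \geq \dist{p x_i}$ for every $p \in P_i$ and hence every point of $P$ still lies in some annulus of some $G_i$. The error chain of \lemref{correctness} then survives verbatim once the unweighted sum $\sum_{p\in P}$ is replaced by $\sum_{p\in P} w(p)$: points with $\Dist(p,\CSet) \leq R$ contribute at most $\frac{\eps}{10c} R \sum_{p\in P} w(p) = \frac{\eps}{10c} WR \leq \frac{\eps}{10c}\MdPrc_\CSet(P)$, points with $\Dist(p,\CSet) > R$ contribute at most $\frac{\eps}{10c}\sum_{p\in P} w(p)\Dist(p,\CSet) = \frac{\eps}{10c}\MdPrc_\CSet(P)$, and together these bound $\cardin{\MdPrc_Y(P) - \MdPrc_Y(\Coreset)}$ by $\frac{2\eps}{10c}\MdPrc_\CSet(P) \leq \eps\MdOpt(P,k) \leq \eps\MdPrc_Y(P)$ for every $k$-point set $Y$; and the cell count now gives $\cardin{\Coreset} = O\pth{(\cardin{\CSet}\log W)/\eps^d}$. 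The only part of the argument that is not a direct citation, and the point I would check most carefully, is precisely this last step — that scaling the grid extent, the number of levels, and the error sum by $W$ instead of $n$ preserves every inequality of \lemref{correctness}, in particular that the exponential grids still reach far enough to contain all of $P$ and that the weighted snapping error is still charged against $\MdPrc_\CSet(P)$, hence against $\MdOpt(P,k)$.
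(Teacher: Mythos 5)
Your proposal is correct and follows essentially the same route as the paper: the paper proves \thmref{coreset:fast:k:median} by pointing to the construction of \secref{construction}, the error bound of \lemref{correctness}, and the running-time discussion there, dismissing the weighted case as an easy extension. Your only addition is to actually carry out that weighted extension (replacing $n$ by $W$ in $R$ and $M$ and re-running the error chain), which is exactly the verification the paper omits and it checks out.
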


\subsection{Coreset for $k$-Means}
\seclab{k:means:coreset}

The construction of the $k$-means coreset follows the $k$-median
with a few minor modifications.  Let $P$ be a set of $n$ points in
$\Re^d$, and a $\CSet$ be a point set $\CSet = \brc{x_1,\ldots,
x_m}$, such that $\SqPrc_\CSet(P) \leq c \SqOpt(P,k)$.  Let $R =
\sqrt{(\SqPrc_\CSet(P)/(c n))}$ be a lower bound estimate of the
average mean radius $\RadSqOpt(P,k) = \sqrt{ \SqOpt(P,k)/n}$.  For
any $p \in P_i$, we have $\dist{p x_i} \leq \sqrt{c n} R$, since
$\dist{p
   x_i}^2 \leq \SqPrc_\CSet(P)$, for $i=1,\ldots, m$.

Next, we construct an exponential grid around each point of $\CSet$,
as in the $k$-median case, and snap the points of $P$ to this grid,
and we pick a representative point for such grid cell. See
\secref{construction} for details.  We claim that the resulting set of
representatives $\Coreset$ is the required coreset.

\begin{theorem}
    \thmlab{k:means:weighted:coreset}%
    Given a set $P$ with $n$ points, and a point set $\CSet$ with $m$
    points, such that $\SqPrc_\CSet(P) \leq c \SqOpt(P,k)$, where $c$
    is a constant. Then, can compute a weighted set $\Coreset$ which
    is a $(k,\eps)$-coreset for $P$, and $\cardin{\Coreset} = O\pth{
       (m \log{n}) /(c\eps)^d}$.  The running time is $O(n )$ if
    $m=O(n^{1/4})$ and $O( n\log{m})$ otherwise.
    
    If $P$ is a weighted set with total weight $W$, then the
    size of the coreset is $O\pth{ (m \log{W})/\eps^d}$.
\end{theorem}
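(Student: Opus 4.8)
The plan is to follow the proof of \lemref{correctness} essentially verbatim, the only real change being that the $k$-means price uses \emph{squared} distances, which forces one extra manipulation. Fix an arbitrary set $Y$ of $k$ points in $\Re^d$, and for $p\in P$ let $p'$ denote the representative of $p$ in $\Coreset$, with $\delta_p=\dist{p p'}$. By the construction of the exponential grids (see \secref{construction}), $\delta_p\le\frac{\eps}{10c}\Dist(p,\CSet)$ when $\Dist(p,\CSet)\ge R$ and $\delta_p\le\frac{\eps}{10c}R$ otherwise --- exactly the bound used in \lemref{correctness}. Regrouping the coreset sum by which point of $P$ maps to which representative, $\SqPrc_Y(\Coreset)=\sum_{p\in P}w(p)\Dist(p',Y)^2$, so
\[
\E=\cardin{\SqPrc_Y(P)-\SqPrc_Y(\Coreset)}\le\sum_{p\in P}w(p)\cardin{\Dist(p,Y)^2-\Dist(p',Y)^2}.
\]
Since $\cardin{\Dist(p,Y)-\Dist(p',Y)}\le\delta_p$ by the triangle inequality and $\cardin{a^2-b^2}=\cardin{a-b}(a+b)$, each term is at most $\delta_p\pth{2\Dist(p,Y)+\delta_p}$, hence
\[
\E\ \le\ 2\sum_{p\in P}w(p)\,\delta_p\Dist(p,Y)\ +\ \sum_{p\in P}w(p)\,\delta_p^2 .
\]

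The first step is to bound the quadratic term $T=\sum_{p\in P}w(p)\delta_p^2$. Splitting $P$ according to whether $\Dist(p,\CSet)\le R$ or $>R$ and inserting the two cases of the bound on $\delta_p$, exactly as in \lemref{correctness}, gives $T\le\frac{\eps^2}{100c^2}\pth{R^2W+\SqPrc_\CSet(P)}$, where $W$ is the total weight of $P$ (so $W=n$ in the unweighted case). By the choice of $R$ we have $\SqPrc_\CSet(P)=cWR^2$, whence $WR^2=\SqPrc_\CSet(P)/c\le\SqOpt(P,k)\le\SqPrc_Y(P)$, and therefore $T=O(\eps^2)\SqPrc_Y(P)$. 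The one genuinely new step is the cross term: by the (weighted) Cauchy--Schwarz inequality,
\[
\sum_{p\in P}w(p)\,\delta_p\Dist(p,Y)\ \le\ \Bigl(\sum_{p\in P}w(p)\delta_p^2\Bigr)^{1/2}\Bigl(\sum_{p\in P}w(p)\Dist(p,Y)^2\Bigr)^{1/2}=\sqrt{T}\,\sqrt{\SqPrc_Y(P)}=O(\eps)\SqPrc_Y(P),
\]
using the bound on $T$. Combining the two estimates yields $\E=O(\eps)\SqPrc_Y(P)$; after, if necessary, refining the grid slightly (which only inflates the constant hidden in $\cardin{\Coreset}$) this becomes $\le\eps\SqPrc_Y(P)$. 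Since $Y$ was arbitrary, $\Coreset$ is a $(k,\eps)$-coreset for $k$-means.

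Finally, the size and running-time bounds come out exactly as in \thmref{coreset:fast:k:median}. Each annulus $V_{i,j}$ is cut into $O((cd/\eps)^d)$ grid cells; there are $O(\log(cW))$ such annuli around each of the $m$ centers, which gives the stated bound on $\cardin{\Coreset}$ (with $W=n$ in the unweighted case). The representatives are produced by the same ``(approximately) assign each $p\in P$ to its nearest $x_i$, then snap to the grid'' procedure of \secref{construction}, so --- using \thmref{batch:n:n} for the assignment when $m$ is small --- the running time is $O(n)$ when $m=O(n^{1/4})$ and $O(n\log m)$ otherwise. The only point needing a little care is the passage from the absolute error bound to the multiplicative one: as in the median case it relies on $\SqOpt(P,k)\le\SqPrc_Y(P)$, now together with the auxiliary inequality $WR^2\le\SqOpt(P,k)$ that absorbs the inner part of $T$. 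Apart from the Cauchy--Schwarz step --- which is where I expect the only real subtlety to lie --- this is a routine adaptation of the proof of \lemref{correctness}.
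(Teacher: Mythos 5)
Your proof is correct, but it handles the crucial step --- bounding the cross term $\sum_p w(p)\,\delta_p\Dist(p,Y)$ --- by a genuinely different route than the paper. The paper never isolates this term globally; instead it partitions $P$ into three classes ($P_R$: points within $R$ of both $\CSet$ and the query set; $P_\CSet$: remaining points closer to the query set than to $\CSet$; $P_\CSetB$: remaining points closer to $\CSet$) and, in each class, replaces $\Dist(p,Y)$ in the product $\delta_p\bigl(2\Dist(p,Y)+\delta_p\bigr)$ by something controllable --- by $R$ on $P_R$, by $\Dist(p,\CSet)$ on $P_\CSet$, and on $P_\CSetB$ it instead upgrades the bound on $\delta_p$ to $\frac{\eps}{10c}\Dist(p,Y)$ --- charging the three pieces to $\SqOpt(P,k)$, $\frac{1}{c}\SqPrc_\CSet(P)$, and $\SqPrc_Y(P)$ respectively. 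You bypass the case analysis entirely: you bound the quadratic term $T=\sum_p w(p)\delta_p^2$ by $O(\eps^2)\SqPrc_Y(P)$ using only the two-case bound on $\delta_p$ inherited from the median construction together with $WR^2=\SqPrc_\CSet(P)/c\le\SqOpt(P,k)$, and then absorb the cross term via weighted Cauchy--Schwarz, $\sum_p w(p)\delta_p\Dist(p,Y)\le\sqrt{T}\sqrt{\SqPrc_Y(P)}$. Both arguments are sound; indeed your constants ($T\le\frac{\eps^2(1+c)}{100c^2}\SqPrc_Y(P)$, cross term $\le\frac{2\eps}{\sqrt{50}}\SqPrc_Y(P)$ for $c\ge 1$) already give $\E\le\eps\SqPrc_Y(P)$ without any grid refinement. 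What the paper's approach buys is elementarity --- every step is a pointwise comparison of distances, which is why the paper can reuse the identical grid and constants from the median case with no new inequality. What your approach buys is brevity and robustness: Cauchy--Schwarz decouples $\delta_p$ from $\Dist(p,Y)$ without ever comparing the query set to $\CSet$, so the three-way partition (and the slightly delicate verification that each class satisfies the hypothesis needed for its bound) disappears. The size and running-time claims you defer to the median construction are handled the same way in the paper.
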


\begin{proof}
    We prove the theorem for an unweighted point set.  The
    construction is as in \secref{k:means:coreset}.  As for
    correctness, consider an arbitrary set $\CSetB$ of $k$ points in
    $\Re^d$.  The proof is somewhat more tedious than the median case,
    and we give short description of it before plunging into the
    details. We partition the points of $P$ into three sets: (i)
    Points that are close (i.e., $\leq R$) to both $\CSetB$ and
    $\CSet$. The error those points contribute is small because they
    contribute small terms to the summation.  (ii) Points that are
    closer to $\CSetB$ than to $\CSet$ (i.e., $P_\CSet$). The error
    those points contribute can be charged to an $\eps$ fraction of
    the summation $\SqPrc_\CSet(P)$. (iii) Points that are closer to
    $\CSet$ than to $\CSetB$ (i.e., $P_\CSetB$). The error is here
    charged to the summation $\SqPrc_\CSetB(P)$.  Combining those
    three error bounds, give us the required result.

    For any $p\in P$, let $p'$ the image of $p$ in $\Coreset$; namely,
    $p'$ is the point in the coreset $\Coreset$ that represents $p$.
    Now, we have
    \begin{equation*}
        \E%
        =%
        \cardin{\SqPrc_\CSetB(P) - \SqPrc_\CSetB(S) } \leq
        \sum_{p\in P} \cardin{ {\Dist(p,\CSetB) }^2
           - \Dist(p', \CSetB)^2}%
        \leq
        \sum_{p\in P} \cardin{ \pth{\Dist(p, \CSetB)
              - \Dist(p', \CSetB)}
           \pth{\Bigl.\!\Dist(p, \CSetB) + \Dist(p',B)} }
    \end{equation*}
    Let
    $P_R = \Set{ p \in P }{ \Dist(p, \CSetB ) \leq R \text{ and }
       \Dist(p , \CSet)\leq R}$,
    $P_\CSet = \Set{ p \in P \setminus P_R }{ \Dist(p,\CSetB) \leq
       \Dist(p,\CSet)}$, and let
    $P_\CSetB = P \setminus \pth{ P_R \cup P_\CSet}$. By the triangle
    inequality, for $p \in P$, we have
    $\Dist(p',\CSetB) + \dist{p p'} \geq \Dist(p, \CSetB)$ and
    $\Dist(p, \CSetB) + \dist{p p'} \geq \Dist(p',\CSetB)$.  Thus,
    $\dist{p p'} \geq \cardin{ \Dist(p, \CSetB) - \Dist(p', \CSetB)}$.
    
    Also,
    $\Dist(p, \CSetB) + \Dist(p', \CSetB) \leq 2\Dist(p, \CSetB) +
    \dist{p p'}$, and thus
    \begin{align*}
      \E_R%
      &=%
        \sum_{p\in P_R} \cardin{ \pth{ \Dist(p, \CSetB) -
        \Dist(p',\CSetB)} \pth{\Dist(p, \CSetB) + \Dist(p',B)}
        }%
      \leq %
      \sum_{p\in P_R} \dist{p p'} \pth{ 2 \Dist(p, \CSetB) + \dist{p
      {}p'}}%
      \\&%
      \leq%
      \sum_{p\in P_R} \frac{\eps}{10} R \pth{ 2 R
      + \frac{\eps}{10} R}%
      \leq %
      \frac{\eps}{3} \sum_{p\in P_R} R^2 \leq \frac{\eps}{3}
      \SqOpt(P,k),
    \end{align*}
    since by definition, for $p \in P_R$, we have
    $\Dist(p, \CSet), \Dist(p,\CSetB) \leq R$.
    
    By construction $\dist{p p'} \leq (\eps/10c)\Dist(p,\CSet)$, for
    all $p \in P_A$, as $\Dist(p, \CSet) \geq R$. Thus,
    \begin{align*}
      \E_\CSet
      &=%
        \sum_{p\in P_\CSet} \dist{p p'} \pth{ 2
        \Dist(p, \CSetB) + \dist{p p'}}
        \leq \sum_{p\in P_\CSet} \frac{\eps}{10c}
        \Dist(p,\CSet) \pth{ 2 + \frac{\eps}{10c}}
        \Dist(p, \CSet) %
      \\&%
      \leq%
      \frac{\eps}{3c} \sum_{p \in
      P_\CSet} \pth{ \Dist(p,\CSet)}^2 \leq
      \frac{\eps}{3} \SqOpt(P,k)
      \leq \frac{\eps}{3} \SqPrc_\CSetB(P).
    \end{align*}

    As for $p \in P_\CSetB$, we have $\dist{p p'} \leq
    \frac{\eps}{10c}\Dist(p, \CSetB)$, since $\Dist(p, \CSetB) \geq
    R$, and $\Dist(p,\CSet) \leq \Dist(p, \CSetB)$. Implying
    $\dist{p p'} \leq (\eps/10c)\Dist(p,B)$ and thus
    \begin{align*}
        \E_\CSetB
      &=%
        \sum_{p\in P_\CSetB} \dist{p p'} \pth{ 2
           \Dist(p, \CSetB) + \dist{p p'}}
        \leq
        \sum_{p \in P_\CSetB}
        \frac{\eps}{10c}\Dist(p,\CSetB)
        \pth{ 2
        \Dist(p, \CSetB) + \frac{\eps}{10c}\Dist(p,\CSetB)}%
      \\&%
      \leq%
        \sum_{p \in P_\CSetB} \frac{\eps}{3} \Dist(p, \CSetB)^2
        \leq \frac{\eps}{3} \SqPrc_\CSetB(P).
    \end{align*}
    We conclude that $\E = \cardin{\SqPrc_\CSetB(P) - \SqPrc_\CSetB(S)
    } \leq \E_R + \E_\CSet + \E_\CSetB \leq \frac{3\eps}{3}
    \SqPrc_\CSetB(P), $ which implies that $(1-\eps)\SqPrc_\CSetB(P)
    \leq \SqPrc_\CSetB(S) \leq (1+\eps) \SqPrc_\CSetB(P)$, as
    required. It is easy to see that we can extend the analysis
    for the case when we have weighted points.
\end{proof}


\section{Fast Constant Factor Approximation Algorithm}
\seclab{fast:const:factor}

Let $P$ be the given point set in $\Re^d$. We want to quickly
compute a constant factor approximation to the $k$-means
clustering of $P$, while using more than $k$ centers. The number
of centers output by our algorithm is $O\pth{k \log^3 n}$.
Surprisingly, the set of centers computed by the following
algorithm is a good approximation for both $k$-median and
$k$-means. To be consistent, throughout this section, we refer to
$k$-means, although everything holds nearly verbatim for
$k$-median as well.

\begin{defn}[bad points]
    For a point set $X$, define a point $p \in P$ as
    \emph{bad} with respect to $X$, if the cost it pays in
    using a center from $X$ is prohibitively larger than the
    price $\Copt$ pays for it; more precisely $\Dist(p,X)
    \geq 2 \Dist(p,\Copt)$. A point $p \in P$ which is not
    bad, is by necessity, if not by choice, \emph{good}.
    Here $\Copt =\Copt(P,k)$ is a set of optimal $k$-means
    centers realizing $\SqOpt(P,k)$.
\end{defn}
We first describe a procedure which given $P$, computes a
small set of centers $X$ and a large $P'\subseteq P$ such
that $X$ induces clusters $P'$ well. Intuitively we want a
set $X$ and a large set of points $P'$ which are \emph{good}
for $X$.

\subsection{Construction of the Set $X$ of Centers}

\seclab{good:subset:centers}

For $k=O(n^{1/4})$, we can compute a $2$-approximate $k$-center
clustering of $P$ in linear time \cite{h-cm-04}, or alternatively, for
$k=\Omega(n^{1/4})$, in $O(n\log{k})$ time, using the algorithm of
Feder and Greene \cite{fg-oafac-88}.  This is the \emph{min-max
   clustering} where we cover $P$ by a set of $k$ balls such the
radius of the largest ball is minimized. Let $V$ be the set of $k$
centers computed, together with the furthest point in $P$ from those
$k$ centers.

Let $L$ be the radius of this $2$-approximate clustering.  Since both
those algorithms are simulating the (slower) algorithm of Gonzalez
\cite{g-cmmid-85}, we have the property that the minimal distance
between any points of $V$ is at least $L$.  Thus, any $k$-means
clustering of $P$, must have price at least $(L/2)^2$, and is at most
of price $n L^2$, and as such $L$ is a rough estimate of
$\SqOpt(P,k)$. In fact, this holds even if we restrict out attention
only to $V$; explicitly $(L/2)^2 \leq \SqOpt(V, k) \leq \SqOpt(P,k)
\leq n L^2$.

Next, we pick a random sample $Y$ from $P$ of size $\rho = \constA k
\log^2 n$, where $\constA$ is a large enough constant whose value
would follow from our analysis.  Let $X = Y \cup V$ be the required
set of cluster centers. In the extreme case where $\rho > n$, we just
set $X$ to be $P$.

\subsection{A Large Good Subset for $X$}
\seclab{good:subset:points}

\subsubsection{Bad points are few}
Consider the set $\Copt$ of $k$ optimal centers for the $k$-means, and
place a ball $\ball_i$ around each point of $c_i\in\Copt$, such that
$\ball_i$ contain $\BadBallSize = n/(20k\log{n})$ points of $P$. If
$\constA$ is large enough, it is easy to see that with high
probability, there is at least one point of $X$ inside every ball
$\ball_i$. Namely, $X \cap \ball_i \ne \emptyset$, for $i=1,\ldots,
k$.

Let $\Pbad$ be the set of all bad points of $P$. Assume, that
there is a point $x_i \in X$ inside $\ball_i$, for $i=1,\ldots,
k$.  Observe, that for any $p \in P \setminus \ball_i$, we have
$\dist{p x_i} \leq 2\dist{p c_i}$. In particular, if $c_i$ is the
closest center in $\Copt$ to $p$, we have that $p$ is good.  Thus,
with high probability, the only bad points in $P$ are the one that
lie inside the balls $\ball_1,\ldots, \ball_k$. But every one of
those balls, contain at most $\BadBallSize$ points of $P$. It
follows, that with high probability, the number of bad points in
$P$ with respect to $X$ is at most $\beta= k \cdot \eta =
n/(20\log n)$.

\subsubsection{Keeping Away from Bad Points}

Although the number of bad points is small, there is no easy way to
determine the set of bad points. We instead construct a set $P'$
ensuring that the clustering cost of the bad points in $P'$ does not
dominate the total cost.  For every point in $P$, we compute its
approximate nearest neighbor in $X$.  This can be easily done in
$O(n\log \cardin{X} + \cardin{X} \log \cardin{X})$ time using
appropriate data structures \cite{amnsw-oaann-98}, or in $O(n +
n\cardin{X}^{1/4} \log n)$ time using \corref{batch:n:n:2} (with $D=n
L$). This stage takes $O(n)$ time, if $k=O(n^{1/4})$, else it takes
$O(n \log{\cardin{X}} + \cardin{X} \log{\cardin{X}}) = O(n \log( k\log
n))$ time, as $\cardin{X} \leq n$.

In the following, to simplify the exposition, we assume that we
compute exactly the distance $r(p) = \Dist(p,X)$, for $p \in P$.

Next, we partition $P$ into classes in the following way.  Let
$P[a,b] = \Set{ p \in P }{a\leq r(p)< b}$. Let $P_0 = P[0, L/(4n)]$,
$P_\infty = P[2Ln,\infty]$ and $P_i = P\pbrc{2^{i-1} L/n, 2^{i} L/n}$,
for $i=1,\ldots, M$, where $M = 2\ceil{\lg n} + 3$. This partition of
$P$ can be done in linear time using the $\log$ and floor function.

Let $P_\alpha$ be the last class in this sequence that
contains more than $2\beta = 2(n/(20\log{n}))$ points. Let
$P' = V \cup \bigcup_{i \leq \alpha} P_i$. We claim that
$P'$ is the required set.  Namely, $\cardin{P'} \geq n/2$
and $\SqPrc_X(P') = O(\SqPrc_{\Copt}(P'))$, where $\Copt
=\Copt(P,k)$ is the optimal set of centers for $P$.

\subsubsection{Proof of Correctness}

Clearly the set $P'$ contains at least $\pth{n - \cardin{P_\infty}
- M \cdot \pth{2n/20 \log{n}}}$ points. Since $P_\infty \subseteq
\Pbad$ and $\cardin{\Pbad} \leq \beta$, hence $|P'| > n/2$.

If $\alpha > 0$, we have $\cardin{P_\alpha} \geq 2\beta =
2(n/(20\log{n}))$. Since $P'$ is the union of all the classes with
distances smaller than the distances in $P_\alpha$, it follows that
the worst case scenario is when all the bad points are in $P_\alpha$.
But with high probability the number of bad points is at most $\beta$,
and since the price of all the points in $P_\alpha$ is roughly the
same, it follows that we can charge the price of the bad points in
$P'$ to the good points in $P_\alpha$.

Formally, let $Q'= P_\alpha \setminus \Pbad$.  For any point $p
\in P' \cap \Pbad$ and $q \in Q'$, we have $\Dist(p,X) \leq
2\Dist(q,X)$. Further $|Q'| > |\Pbad|$. Thus, $\SqPrc_X(P' \cap
\Pbad) \leq 4\SqPrc_X(Q') \leq 16 \SqPrc_{\Copt}(Q') \leq
16\SqPrc_{\Copt}(P')$. Thus, 
\[
\SqPrc_X(P') = \SqPrc_X(P' \cap \Pbad) + \SqPrc_X(P'
\setminus \Pbad ) \leq 16 \SqPrc_{\Copt}(P') +
4\SqPrc_{\Copt}(P') = 20 \SqPrc_{\Copt}(P').
\]

If $\alpha =0$ then for any point $p \in P'$, we have $(\Dist(p,X))^2
\leq n(L/4n)^2 \leq L^2/(4n)$. and thus $\SqPrc_X(P') \leq L^2/4 \leq
\SqPrc_{\Copt}(V) \leq \SqPrc_{\Copt}(P')$, since $V \subseteq P'$.

In the above analysis we assumed that the nearest neighbor data
structure returns the exact nearest neighbor. If we were to use an
approximate nearest neighbor instead, the constants would slightly
deteriorate.

\begin{lemma}
    \lemlab{good:subset}%
    Given a set $P$ of $n$ points in $\Re^d$, and parameter $k$, one
    can compute sets $P'$ and $X \subseteq P$ such that, with high
    probability, $\cardin{P'} \geq n/2$, $\cardin{X} = O(k \log^2 n)$,
    and $\SqPrc_{\Copt}(P') \geq \SqPrc_X(P')/32$, where $\Copt$ is
    the optimal set of $k$-means centers for $P$. The running time of
    the algorithm is $O(n)$ if $k = O(n^{1/4})$, and $O(n \log{(k
       \log{n})} )$ otherwise.
\end{lemma}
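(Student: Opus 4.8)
The plan is to carry out the construction developed in this section — the center set $X = Y \cup V$ and the filtered subset $P'$ — and verify the three guarantees in turn. First I would form $V$ as the $k$ centers of a $2$-approximate min-max ($k$-center) clustering of $P$, augmented by the farthest point of $P$ from them; this costs $O(n)$ time for $k = O(n^{1/4})$ by \cite{h-cm-04} and $O(n\log k)$ otherwise by \cite{fg-oafac-88}. Since both algorithms simulate Gonzalez's, the points of $V$ are pairwise at distance at least the clustering radius $L$, which yields the crude bracket $(L/2)^2 \le \SqOpt(V,k) \le \SqOpt(P,k) \le nL^2$. Taking $Y$ to be a uniform sample of size $\rho = \constA k\log^2 n$ for a large constant $\constA$, we get $\cardin{X} = O(k\log^2 n)$ immediately.

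Next I would show that bad points are rare. Around each optimal center $c_i\in\Copt$ place a ball $\ball_i$ holding exactly $\BadBallSize = n/(20k\log n)$ points of $P$; a Chernoff bound (with $\constA$ large enough) together with a union bound over the $k$ balls gives that, with high probability, $X$ meets every $\ball_i$. Conditioned on this, any point $p$ lying outside all $k$ balls is good: if $c_i$ is its nearest optimal center and $x_i\in X\cap\ball_i$, then $\dist{c_i x_i}$ is at most the radius of $\ball_i$, which is below $\dist{p c_i}$ since $p\notin\ball_i$, so $\dist{p x_i}\le 2\dist{p c_i}$. Hence every bad point lies in some $\ball_i$, and $\cardin{\Pbad}\le k\BadBallSize = n/(20\log n)=:\beta$ with high probability.

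I would then compute $r(p) = \Dist(p,X)$ for all $p\in P$ — precisely enough via \corref{batch:n:n:2} (with $D=nL$) in $O(n)$ time when $k = O(n^{1/4})$, and via the approximate-nearest-neighbor structure of \cite{amnsw-oaann-98} in $O(n\log(k\log n))$ time otherwise — bucket $P$ into the geometric distance classes $P_0,P_1,\ldots,P_M,P_\infty$ in linear time (using floor and $\log$), let $P_\alpha$ be the last class with more than $2\beta$ points, and set $P' = V\cup\bigcup_{i\le\alpha}P_i$. For the size bound, $P'$ omits at most $\cardin{P_\infty}+M\cdot 2\beta$ points; since $P_\infty\subseteq\Pbad$ this is at most $\beta + (2\lceil\lg n\rceil+3)\cdot n/(10\log n)$, which is below $n/2$ for $n$ large, so $\cardin{P'}\ge n/2$.

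The step I expect to be the real obstacle is the cost inequality $\SqPrc_X(P') = O(\SqPrc_{\Copt}(P'))$, whose point is to charge the unidentified bad points in $P'$ to good points at a comparable distance. If $\alpha=0$, every point of $P'$ has $r(p)\le L/(4n)$, so $\SqPrc_X(P')\le n(L/4n)^2\le(L/2)^2\le\SqOpt(V,k)\le\SqPrc_{\Copt}(V)\le\SqPrc_{\Copt}(P')$ using $V\subseteq P'\subseteq P$. If $\alpha>0$, put $Q' = P_\alpha\setminus\Pbad$; then $\cardin{Q'}\ge\cardin{P_\alpha}-\cardin{\Pbad}\ge 2\beta-\beta\ge\cardin{P'\cap\Pbad}$, and every distance in $P'$ is within a factor $2$ of every distance in $Q'\subseteq P_\alpha$ (consecutive classes differ by a factor $2$ and $P_\alpha$ is the last included class), so an injective charging gives $\SqPrc_X(P'\cap\Pbad)\le 4\SqPrc_X(Q')$. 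As $Q'$ and $P'\setminus\Pbad$ consist of good points, $\SqPrc_X(\cdot)\le 4\SqPrc_{\Copt}(\cdot)$ on each, and monotonicity in the point set yields $\SqPrc_X(P')\le 16\SqPrc_{\Copt}(P')+4\SqPrc_{\Copt}(P')=20\SqPrc_{\Copt}(P')$. Finally, replacing exact nearest neighbors by constant-approximate ones only mildly worsens these constants, and the stated factor $32$ absorbs the loss; this approximation, and the union bound over the balls $\ball_i$, are the only genuinely delicate points. The running time is dominated by the $k$-center and batched-nearest-neighbor computations, both within the claimed bounds.
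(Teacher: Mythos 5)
Your proposal reconstructs the paper's argument essentially verbatim: the same set $X = Y \cup V$, the same balls $\ball_i$ of $\BadBallSize = n/(20k\log n)$ points to bound $\cardin{\Pbad}$ by $\beta$, the same geometric distance classes with $P_\alpha$ the last class exceeding $2\beta$ points, the same charging of $P'\cap\Pbad$ to $Q' = P_\alpha\setminus\Pbad$ giving the factor $20$ (with the separate $\alpha=0$ case handled via $\SqOpt(V,k)\ge (L/2)^2$), and the same remark that approximate nearest neighbors degrade the constant only to the stated $32$. The proof is correct and matches the paper's.
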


Now, finding a constant factor $k$-median clustering is easy. Apply
\lemref{good:subset} to $P$, remove the subset found, and repeat on
the remaining points. Clearly, this would require $O(\log{n})$
iterations. We can extend this algorithm to the weighted case, by
sampling $O(k \log^2 W)$ points at every stage, where $W$ is the total
weight of the points. Note however, that the number of points no
longer shrink by a factor of two at every step, as such the running
time of the algorithm is slightly worse.

\begin{theorem}[Clustering with more centers]
    \thmlab{k:cluster:const:rough}%
    Given a set $P$ of $n$ points in $\Re^d$, and parameter $k$, one
    can compute a set $X$, of size $O(k\log^3 n)$, such that
    $\SqPrc_X(P) \leq 32 \SqOpt(P,k)$. The running time of the
    algorithm is $O(n)$ if $k = O(n^{1/4})$, and $O(n \log{(k \log
       n)})$ otherwise.
    
    Furthermore, the set $X$ is a good set of centers for
    $k$-median.  Namely, we have that $\MdPrc_X(P) \leq 32
    \MdOpt(P,k)$.
    
    If the point set $P$ is weighted, with total weight $W$, then the
    size of $X$ becomes $O(k \log^3 W)$, and the running time becomes
    $O(n \log^2 W )$.
\end{theorem}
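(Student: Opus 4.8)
The plan is to iterate \lemref{good:subset}. Starting with $P_0 = P$, apply the lemma to obtain $X_0, P_0'$ with $|P_0'| \ge |P_0|/2$ and $\SqPrc_{X_0}(P_0') \le 32 \SqPrc_{\Copt}(P_0') \le 32\SqOpt(P,k)$ (the last inequality since $P_0' \subseteq P$ and $\Copt$ is optimal for all of $P$). Then recurse on $P_1 = P_0 \setminus P_0'$, and so on, obtaining center sets $X_0, X_1, \dots$ and peeled-off sets $P_0', P_1', \dots$. Since $|P_{i+1}| \le |P_i|/2$, after $t = O(\log n)$ rounds all points are exhausted; set $X = \bigcup_i X_i$. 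Then $|X| = O(\log n) \cdot O(k\log^2 n) = O(k\log^3 n)$.

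\smallskip
First I would argue the cost bound. For each round $i$, the crucial point is that $\SqPrc_{\Copt}(P_i') \le \SqOpt(P,k)$: the sets $P_0', P_1', \dots$ are disjoint and their union is $P$, so $\sum_i \SqPrc_{\Copt}(P_i') = \SqPrc_{\Copt}(P) = \SqOpt(P,k)$, and a fortiori each summand is at most $\SqOpt(P,k)$. Hmm, but this is not quite strong enough on its own — I actually want to sum the per-round bounds. Since $\SqPrc_X(P_i') \le \SqPrc_{X_i}(P_i') \le 32\SqPrc_{\Copt}(P_i')$ (a point only does better with more centers available), summing over $i$ gives $\SqPrc_X(P) = \sum_i \SqPrc_X(P_i') \le 32 \sum_i \SqPrc_{\Copt}(P_i') = 32\SqOpt(P,k)$. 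That is the whole argument for the cost.

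\smallskip
Next, the $k$-median claim: the entire development of \secref{fast:const:factor} and \lemref{good:subset} goes through verbatim with distances replacing squared distances (the ``bad point'' definition, the sampling argument that bad points are few, the charging of bad points to good points in $P_\alpha$), as the text already notes; the constant-factor guarantee in the median metric is even cleaner since there is no squaring (so $2$ in place of $4$, $16$, etc.). So the same set $X$ satisfies $\MdPrc_X(P) \le 32\MdOpt(P,k)$ by the identical telescoping over rounds. Finally, for the weighted case with total weight $W$: by \thmref{k:means:weighted:coreset}-style reasoning, one samples $O(k\log^2 W)$ points per round (so that every ``bad ball,'' which now must contain weight $\eta = W/(20k\log W)$, is hit whp), giving $|X| = O(k\log^3 W)$; the subtlety flagged in the text is that the total weight no longer halves each round (only the count of surviving \emph{points} is controlled), so one needs $O(\log W)$ rounds — or more carefully, observes that each round removes at least half the remaining points, hence $O(\log n)$ rounds suffice but each costs $O(n\log W)$ — yielding the claimed $O(n\log^2 W)$ running time.

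\smallskip
The main obstacle I anticipate is purely bookkeeping: making the telescoping-over-rounds argument airtight requires that in round $i$ the lemma is applied to $P_i$ but the cost is charged against $\Copt = \Copt(P,k)$, the optimum for the \emph{original} set, not for $P_i$. The lemma as stated gives $\SqPrc_{X_i}(P_i') \le 32\,\SqPrc_{\Copt(P_i,k)}(P_i')$, and one must note $\SqPrc_{\Copt(P_i,k)}(P_i') \le \SqPrc_{\Copt(P,k)}(P_i')$ trivially (fewer... wait — no, $\Copt(P_i,k)$ is the \emph{best} $k$-set for $P_i$, which need not be better than $\Copt(P,k)$ \emph{restricted to} $P_i'$; but in fact it \emph{is}, since $\SqPrc_{\Copt(P_i,k)}(P_i) \le \SqPrc_{\Copt(P,k)}(P_i)$ and $P_i' \subseteq P_i$ does not immediately pass through). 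The clean fix: track that \lemref{good:subset}'s guarantee is really $\SqPrc_{X_i}(P_i') \le 32 \SqPrc_{Y}(P_i')$ for \emph{any} $k$-point set $Y$ that is ``good enough'' in the sense used inside its proof — in particular for $Y = \Copt(P,k)$ — so each round's error is charged directly to $\Copt(P,k)$, and the disjointness of the $P_i'$ then closes the sum. Once that is pinned down, everything else is routine.
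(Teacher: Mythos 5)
Your proposal is correct and follows essentially the same route as the paper, which simply iterates \lemref{good:subset}, peels off the set $P'$ each round, and takes the union of the $O(\log n)$ center sets. The telescoping over the disjoint sets $P_i'$ and the observation that the lemma's charging argument must be run against $\Copt(P,k)$ (rather than $\Copt(P_i,k)$) — which works because the sampling and charging in the lemma's proof apply to any fixed $k$-point set hit by the sample — is exactly the detail the paper leaves implicit, and you have pinned it down correctly.
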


\section{$(1+\eps)$-Approximation for $k$-Median}
\seclab{eps:approx:k:median}

We now present the approximation algorithm using exactly $k$
centers. Assume that the input is a set of $n$ points. We use the set
of centers computed in \thmref{k:cluster:const:rough} to compute a
constant factor coreset using the algorithm of
\thmref{coreset:fast:k:median}.  The resulting coreset $\Coreset$, has
size $O(k \log^4 n )$.  Next we compute a $O(n)$ approximation to the
$k$-median for the coreset using the $k$-center (min-max) algorithm
\cite{g-cmmid-85}.  Let $C_0 \subseteq \Coreset$ be the resulting set
of centers.  Next we apply the local search algorithm, due to Arya
\etal{} \cite{agkmm-lshkm-04}, to $C_0$ and $\Coreset$, where the set
of candidate points is $\Coreset$. This local search algorithm, at
every stage, picks a center $c$ from the current set of centers
$C_{curr}$, and a candidate center $s \in \Coreset$, and swaps $c$ out
of the set of centers and $s$ into the set of centers. Next, if the
new set of centers
$C_{curr}' = C_{curr} \setminus \brc{c} \cup \brc{s}$ provides a
considerable improvement over the previous solution (i.e.,
$\MdPrc_{C_{curr}}(\Coreset) \leq
(1-\eps/k)\MdPrc_{C_{curr}'}(\Coreset)$ where $\eps$ here is an
arbitrary small constant), then we set $C_{curr}$ to be $C_{curr}'$.
Arya \etal{} \cite{agkmm-lshkm-04} showed that the algorithm
terminates, and it provides a constant factor approximation to
$\MdOptD(\Coreset, k)$, and as hence to $\MdOpt(P,k)$.  It is easy to
verify that it stops after $O(k \log{n})$ such swaps. Every swap, in
the worst case, requires considering $\cardin{\Coreset}k$
sets. Computing the price of clustering for every such candidate set
of centers takes $O\pth{ \cardin{\Coreset} k }$ time.  Thus, the
running time of this algorithm is
$O \pth{ \cardin{\Coreset}^2 k^{3} \log n}=O \pth{ k^5 \log^9 n
}$. Finally, we use the new set of centers with
\thmref{coreset:fast:k:median}, and get a $(k,\eps)$-coreset for
$P$. It is easy to see that the algorithm works for weighted
point-sets as well. Putting in the right bounds from
\thmref{k:cluster:const:rough} and \thmref{coreset:fast:k:median} for
weighted sets, we get the following.

\begin{lemma}[coreset]
    \lemlab{k:coreset:small:median}%
    Given a set $P$ of $n$ points in $\Re^d$, one can
    compute a $k$-median $(k,\eps)$-coreset $\Coreset$ of
    $P$, of size $O\pth{ (k/\eps^d)\log{n} }$, in time $O
    \pth{ n + k^5 \log^9 n }$.
    
    If $P$ is a weighted set, with total weight $W$, the
    running time of the algorithm is $O( n \log^2 W + k^5
    \log^9 W)$.
\end{lemma}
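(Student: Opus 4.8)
The plan is to assemble the lemma from the two engines already built: the rough clustering of \thmref{k:cluster:const:rough} and the coreset construction of \thmref{coreset:fast:k:median}, with a local-search refinement in between to bring the center count down from $O(k\log^3 n)$ to exactly $k$. First I would invoke \thmref{k:cluster:const:rough} on $P$ to obtain, with high probability, a set $X$ of size $O(k\log^3 n)$ with $\MdPrc_X(P)\le 32\MdOpt(P,k)$, in time $O(n)$ when $k=O(n^{1/4})$ and $O(n\log(k\log n))$ otherwise. Feeding $X$ into \thmref{coreset:fast:k:median} (with the constant $c=32$) produces a weighted $(k,\eps)$-coreset $\Coreset_1$ of size $O\pth{(\cardin{X}\log n)/\eps^d}=O\pth{(k/\eps^d)\log^4 n}$; since $\cardin{X}$ may exceed $\sqrt n$, this costs $O(n\log\cardin{X})=O(n\log(k\log n))$ time, which is absorbed into $O(n)$ up to the stated slack. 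The essential point is that $\Coreset_1$ is a genuine $(k,\eps)$-coreset, so any $\pth{1+O(\eps)}$-approximate $k$-median solution computed \emph{on $\Coreset_1$} is a $\pth{1+O(\eps)}$-approximate solution for $P$; rescaling $\eps$ by a constant at the outset handles the accumulated factors.

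Next I would run, \emph{on the small weighted instance $\Coreset_1$}, a cheap seeding step followed by the local-search algorithm of Arya \etal{} \cite{agkmm-lshkm-04}. The seeding is Gonzalez's $k$-center algorithm \cite{g-cmmid-85}, which on $\cardin{\Coreset_1}$ weighted points yields an $O(\cardin{\Coreset_1})$-approximation $C_0\subseteq\Coreset_1$ to $\MdOptD(\Coreset_1,k)$ in time near-linear in $\cardin{\Coreset_1}$. Local search then does $O(k\log n)$ improving swaps, each swap examining up to $k\cdot\cardin{\Coreset_1}$ candidate pairs and spending $O(k\cdot\cardin{\Coreset_1})$ per pair to evaluate the weighted price, for a total of $O\pth{\cardin{\Coreset_1}^2 k^3\log n}=O\pth{k^5\log^9 n}$ (the $1/\eps$ factors sit inside $\cardin{\Coreset_1}$ and I would either keep them explicit or absorb them into the constant for the local-search accuracy parameter, treating that $\eps$ as a fixed small constant as the text does). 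By \cite{agkmm-lshkm-04} the output $C$ is a constant-factor approximation to $\MdOptD(\Coreset_1,k)$, hence to $\MdOpt(\Coreset_1,k)$, hence — using that $\Coreset_1$ is a $(k,\eps)$-coreset — a constant-factor approximation to $\MdOpt(P,k)$. Finally I apply \thmref{coreset:fast:k:median} a \emph{second} time, now with $\CSet=C$, $\cardin{C}=k$: since $k=O(\sqrt n)$ in the regime of interest this runs in $O(n)$ time and produces the final coreset $\Coreset$ of size $O\pth{(k\log n)/\eps^d}$, as claimed.

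For the weighted version of the statement, each ingredient already has a weighted form: \thmref{k:cluster:const:rough} gives $\cardin{X}=O(k\log^3 W)$ in time $O(n\log^2 W)$, \thmref{coreset:fast:k:median} replaces $\log n$ by $\log W$, and the local-search phase operates on the resulting coreset whose size is now $O\pth{(k/\eps^d)\log^4 W}$, so the swap count and per-swap cost give $O(k^5\log^9 W)$; adding the $O(n\log^2 W)$ preprocessing yields the stated $O(n\log^2 W + k^5\log^9 W)$. The only real subtlety — and the step I expect to need the most care — is bookkeeping the error: each application of a $(k,\eps)$-coreset multiplies the objective by $(1\pm\eps)$, and the local-search routine introduces its own multiplicative constant, so I must fix a target accuracy, run everything with $\eps$ scaled down by an absolute constant, and verify that "constant-factor on the coreset" really does transfer to "constant-factor on $P$" before the constant-factor solution $C$ is fed back into \thmref{coreset:fast:k:median}; the running-time accounting is then routine arithmetic on the bounds quoted above.
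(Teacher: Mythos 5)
Your proposal matches the paper's own construction essentially step for step: constant-factor centers from \thmref{k:cluster:const:rough}, a first (constant-accuracy) coreset via \thmref{coreset:fast:k:median}, Gonzalez seeding plus the Arya \etal{} local search on that coreset to get exactly $k$ constant-factor centers, and a second application of \thmref{coreset:fast:k:median} to produce the final $(k,\eps)$-coreset, with the same running-time accounting in both the unweighted and weighted cases. The error-propagation caveat you flag at the end is handled in the paper exactly as you anticipate, by taking the intermediate coreset's accuracy parameter to be a fixed constant.
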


We would like to apply the algorithm of Kolliopoulos and Rao
\cite{kr-nltas-99} to the coreset, but unfortunately, their
algorithm only works for the discrete case, when the medians are
part of the input points. Thus, the next step is to generate from
the coreset, a small set of candidate points in which we can
assume all the medians lie, and use the (slightly modified)
algorithm of \cite{kr-nltas-99} on this set.

\begin{defn}[Centroid Set]
    Given a set $P$ of $n$ points in $\Re^d$, a set $\CenSet
    \subseteq \Re^d$ is an \emph{$(k,\eps)$-approximate
       centroid set} for $P$, if there exists a subset
    $C \subseteq \CenSet$ of size $k$, such that
    $\MdPrc_C(P) \leq (1+\eps)\MdOpt(P,k)$.
\end{defn}

\begin{lemma}
    \lemlab{k:median:cen:set}%
    Given a set $P$ of $n$ points in $\Re^d$, one can
    compute an $(k,\eps)$-centroid set $\CenSet$ of size
    $O(k^2\eps^{-2d}\log^2{n})$.  The running time of this
    algorithm is $O \pth{ n + k^5 \log^9 n +
       k^2\eps^{-2d}\log^2{n} }$.
    
    For the weighted case, the running time is
    $O \pth{ n \log^2 W + k^5 \log^9 W + k^2\eps^{-2d}\log^2{W} }$,
    and the centroid set is of size $O(k^2\eps^{-2d}\log^2{W})$.
\end{lemma}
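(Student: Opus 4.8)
Let me think about what we need. We have a $(k,\eps)$-coreset $\Coreset$ of size $O((k/\eps^d)\log n)$ from Lemma \ref{lemma:k:coreset:small:median}. We want a centroid set $\CenSet$ — a polynomial-size set of candidate locations such that some $k$-subset achieves $(1+\eps)\MdOpt(P,k)$.

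The plan is to build $\CenSet$ by placing an exponential grid around each point of $\Coreset$, mirroring the coreset construction of Section \ref{sec:construction}. First I would run the constant-factor coreset/clustering machinery to get, in $O(n + k^5\log^9 n)$ time, both the small coreset $\Coreset$ and a constant-factor approximation to $\MdOpt(\Coreset,k)$ (equivalently to $\MdOpt(P,k)$, up to the $(1\pm\eps)$ coreset guarantee); from this we extract an estimate $R$ of the average optimal radius $\RadMdOpt$, exactly as in Section \ref{sec:construction}. Then, around each of the $m = |\Coreset| = O((k/\eps^d)\log n)$ coreset points $x_i$, I build the same doubly-exponential grid: squares $\GridSqr_{i,j}$ of side $R2^j$ for $j = 0,\dots,M$ with $M = \ceil{2\lg(cn)}$, each annulus $V_{i,j}$ subdivided into a grid of side $\Theta(\eps R 2^j / (cd))$. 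The centroid set $\CenSet$ is the set of all these grid cell centers (or one representative vertex per cell). The cell count per coreset point is $O(\eps^{-d}\log n)$, so $|\CenSet| = O(m\,\eps^{-d}\log n) = O(k^2\eps^{-2d}\log^2 n)$, as claimed.

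For correctness, I would argue that snapping the optimal centers to $\CenSet$ costs little. Let $\Copt = \{c_1,\dots,c_k\}$ realize $\MdOpt(P,k)$. For each $c_\ell$, let $x_{i(\ell)}$ be its nearest coreset point; then $c_\ell$ lies in some annulus $V_{i(\ell),j}$ around $x_{i(\ell)}$ (or within $\GridSqr_{i(\ell),0}$), so there is a grid cell center $\hat c_\ell \in \CenSet$ with $\dist{c_\ell \hat c_\ell} = O(\eps R 2^j / (cd)) \cdot \sqrt d = O(\eps\,\Dist(c_\ell,\Coreset)/c + \eps R/c)$. Summing the perturbation over all points $p$ assigned to $c_\ell$ and using the triangle inequality $|\Dist(p,\widehat{\Copt}) - \Dist(p,\Copt)| \le \max_\ell \dist{c_\ell \hat c_\ell}$, the total error is bounded by $O(\eps/c)$ times (sum of distances from coreset points to $\Copt$) plus $O(\eps/c)\,|P|\,R$, which by the choice of $R$ and the coreset property is at most $\eps\,\MdOpt(P,k)$ — the very same bookkeeping as in the proof of Lemma \ref{lemma:correctness}. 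Hence $\MdPrc_{\widehat{\Copt}}(P) \le (1+O(\eps))\MdOpt(P,k)$, and rescaling $\eps$ gives the centroid-set property. The running time is dominated by (i) computing the coreset and its constant-factor clustering, $O(n + k^5\log^9 n)$, and (ii) enumerating the grid cells, $O(|\CenSet|) = O(k^2\eps^{-2d}\log^2 n)$, giving the stated bound. The weighted case replaces $\log n$ by $\log W$ throughout and inherits the $O(n\log^2 W + k^5\log^9 W)$ cost from Lemma \ref{lemma:k:coreset:small:median}.

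The main subtlety — the part I would be most careful about — is that the grid resolution is tied to $\Dist(\cdot,\Coreset)$, which controls the \emph{coreset points'} distances to their nearest $x_i$, whereas the error we must bound involves the \emph{original} points $p$ of $P$ and their optimal centers $c_\ell$. The fix is to route everything through the coreset: a cell center $\hat c_\ell$ near $c_\ell$ perturbs $\MdPrc_{\cdot}(\Coreset)$ by at most $O(\eps/c)\,(\MdPrc_{\Coreset}(\text{nearest-}x) + |\Coreset|_w R)$, which is $O(\eps)\MdOpt$, and then the coreset guarantee of Lemma \ref{lemma:k:coreset:small:median} transfers this back to $P$ at the cost of another $(1\pm\eps)$ factor. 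One must check that the doubly-exponential grid indeed has enough radial range to contain every optimal center — but since $\SqPrc$ and $\MdPrc$ estimates sandwich $\MdOpt$ within a factor of $O(n)$, any reasonable optimal center lies within distance $O(nR)$ of some $x_i$, which $M = \ceil{2\lg(cn)}$ levels cover. With these two observations in place the rest is the routine triangle-inequality accounting already carried out in Section \ref{sec:coresets}.
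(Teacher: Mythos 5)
Your proposal follows essentially the same route as the paper: build the $(k,\Theta(\eps))$-coreset and the constant-factor $k$-median solution, set $R$ from its price, lay the exponential grid of \secref{construction} around each coreset point to form $\CenSet$, snap $\Copt$ to $\CenSet$, and bound the resulting error first against $\Coreset$ and then transfer it to $P$ via the coreset guarantee. The subtleties you flag (routing the perturbation through the coreset, and checking that $M=\ceil{2\lg(cn)}$ annuli suffice to cover every optimal center) are exactly the points the paper's proof relies on, so the argument is correct as proposed.
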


\begin{proof}
    Compute a $(k,\eps/12)$-coreset $\Coreset$ using
    \lemref{k:coreset:small:median}. We retain the set
    $\CSetB$ of $k$ centers, for which $\MdPrc_\CSetB( P )
    =O( \MdOpt(P,k) )$, which is computed during the
    construction of $\Coreset$.  Further let $R =
    \MdPrc_\CSetB(P) / n$.
    
    Next, compute around each point of $\Coreset$, an exponential grid
    using $R$, as was done in \secref{construction}.  This results in
    a point set $\CenSet$ of size of $O(k^2\eps^{-2d}\log^2{n})$. We
    claim that $\CenSet$ is the required centroid set.  The proof
    proceeds on similar lines as the proof of
    \thmref{coreset:fast:k:median}.
    
    Indeed, let $\Copt$ be the optimal set of $k$ medians.
    We snap each point of $\Copt$ to its nearest neighbor in
    $\CenSet$, and let $X$ be the resulting set. Arguing as
    in the proof of \thmref{coreset:fast:k:median}, we have
    that $\cardin{\MdPrc_X(\Coreset)
       -\MdPrc_{\Copt}(\Coreset)}$ $\leq
    (\eps/12)\MdPrc_{\Copt}(\Coreset)$. On the other hand,
    by definition of a coreset, $\cardin{\MdPrc_{\Copt}(P) -
       \MdPrc_{\Copt}(\Coreset)}\leq (\eps/12)$
    $\MdPrc_{\Copt}(P)$ and $\cardin{\MdPrc_{X}(P) -
       \MdPrc_{X}(\Coreset)}\leq (\eps/12)$ $\MdPrc_{X}(P)$.
    As such, $\MdPrc_{\Copt}(\Coreset) \leq
    (1+\eps/12)\MdPrc_{\Copt}(P)$ and it follows
    \begin{equation*}
        \cardin{\MdPrc_X(\Coreset) -\MdPrc_{\Copt}(\Coreset)}
        \leq (\eps/12)(1+\eps/12)\MdPrc_{\Copt}(P) \leq
        (\eps/6) \MdPrc_{\Copt}(P).        
    \end{equation*}
    As such,
    \begin{align*}
      \MdPrc_X(P) %
      &\leq%
        \frac{1}{1-\eps/12}
        \MdPrc_X(\Coreset) \leq 2
        \MdPrc_X(\Coreset)
        \leq 2 \pth{ \MdPrc_{\Copt}(\Coreset) +
        \frac{\eps}{6} \MdPrc_{\Copt}(P)}%
      \\&%
      \leq%
      2 \pth{\pth{1+ \frac{\eps}{12}} \MdPrc_{\Copt}(P) + \frac{\eps}{6}
      \MdPrc_{\Copt}(P)}
      \leq%
      3\MdPrc_{\Copt}(P),
    \end{align*}
    for $\eps < 1$. We conclude that $\cardin{\MdPrc_X(P)
       - \MdPrc_X(\Coreset) } \leq (\eps/12)\MdPrc_X(P) \leq
    (\eps/3) \MdPrc_{\Copt}(P)$.  Putting things together,
    we have
    \begin{align*}
      \cardin{\MdPrc_{X}(P) - \MdPrc_{\Copt}(P) }
      &\leq%
        \cardin{\MdPrc_X(P) - \MdPrc_{X}(\Coreset) }
        +  \cardin{\MdPrc_{X}(\Coreset) -
        \MdPrc_{\Copt}(\Coreset) }
        +  \cardin{\MdPrc_{\Copt}(\Coreset) -
        \MdPrc_{\Copt}(P) } \\
      &\leq%
        \pth{\frac{\eps}{3} + \frac{\eps}{6} +
        \frac{\eps}{12}} \MdPrc_{\Copt}(P)
        \leq \eps\MdPrc_{\Copt}(P).
    \end{align*}
\end{proof}

We are now in the position to get a fast approximation
algorithm.  We generate the centroid set, and then we modify
the algorithm of Kolliopoulos and Rao so that it considers
centers only from the centroid set in its dynamic
programming stage.  For the weighted case, the depth of the
tree constructed in \cite{kr-nltas-99} is $O(\log{W})$
instead of $O(\log{n})$. Further since their algorithm works
in expectation, we run it independently
$O(\log(1/\delta)/\eps)$ times to get a guarantee of
$(1-\delta)$.

\begin{theorem}[\cite{kr-nltas-99}]
    \thmlab{kr:algo}%
    Given a weighted point set $P$ with $n$ points in $\Re^d$, with
    total weight $W$, a centroid set $\CenSet$ of size at most $n$,
    and a parameter $\delta > 0$, one can compute
    $(1+\eps)$-approximate $k$-median clustering of $P$ using only
    centers from $\CenSet$. The overall running time is
    $O \pth{ \Ckr n (\log k) (\log W) \log (1/\delta) }$, where
    $\Ckr = \CkrExp$.  The algorithm succeeds with probability
    $\geq 1-\delta$.

\end{theorem}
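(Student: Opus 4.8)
The plan is to invoke the Euclidean $k$-median approximation scheme of Kolliopoulos and Rao \cite{kr-nltas-99} essentially as a black box and to argue that three modifications are harmless: (i) forcing the dynamic program to draw its medians only from the candidate set $\CenSet$ rather than from the input points; (ii) replacing the $O(\log n)$ bound on the depth of their randomly shifted dissection by an $O(\log W)$ bound in the weighted setting; and (iii) amplifying the success probability to $1-\delta$. First I would recall the structure of their algorithm: after a standard normalization (translate and scale, then collapse clusters of tiny diameter; the latter costs only an $\eps$-fraction of $\MdOpt(P,k)$, and in the weighted case the surviving spread can be kept polynomial in $W$ rather than in $n$), one builds a randomly shifted hierarchical grid over the bounding box and recurses until each leaf cell is light. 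The depth of this dissection is $O(\log(\text{spread}))$, hence $O(\log W)$ once the normalization uses the weighted bound. A bottom-up dynamic program then fills, for every cell of the dissection, a table indexed by the number of medians charged to that cell together with an ``interface'' describing how the clients crossing the cell boundary are routed, quantized to a bounded family of portal configurations; the size of that family is exactly the quantity $\Ckr = \CkrExp$ in the bound.

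Next I would handle the candidate-set restriction. The only place the input points enter their dynamic program is as the set of allowed median locations inside a cell; replacing that set by $\CenSet$ restricted to the cell changes nothing structurally, and since $\cardin{\CenSet}\leq n$ and the candidate lists over all cells of one level have total size at most $\cardin{\CenSet}$, the per-level work stays $O(\Ckr\, n)$ up to lower-order factors, for a total of $O(\Ckr\, n \log W)$ cell updates; the remaining $\log k$ factor is incurred exactly as in \cite{kr-nltas-99} when combining children tables over all splits of the median budget. For correctness I would use the definition of a $(k,\eps)$-approximate centroid set: there is a $k$-subset $C \subseteq \CenSet$ with $\MdPrc_C(P) \leq (1+\eps)\MdOpt(P,k)$. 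Feeding $C$ into the Kolliopoulos--Rao structure theorem produces, in expectation over the random shift, a dissection-respecting solution whose medians are still the points of $C$ --- the perturbation only reroutes clients through portals and never relocates a median --- and whose cost is at most $(1+\eps)\MdPrc_C(P)$. Since the dynamic program optimizes over all dissection-respecting solutions with medians in $\CenSet$, its output has expected cost at most $(1+\eps)^2\MdOpt(P,k)$, and rescaling $\eps$ by a constant gives a $(1+\eps)$-approximation in expectation.

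Finally, for the probability amplification I would run the whole procedure $O(\eps^{-1}\log(1/\delta))$ times independently with fresh random shifts and output the cheapest of the resulting clusterings, evaluating each cost in $O(n)$ time; since the expected cost of a single run is within $(1+\eps)$ of optimal, Markov's inequality gives a single run probability $\Omega(\eps)$ of being within $(1+O(\eps))$ of optimal, so this many runs fail simultaneously with probability at most $\delta$. The extra $\eps^{-1}$ repetition factor is dominated by the $\Ckr$ term (already exponential in a polynomial in $1/\eps$), so it is absorbed into the stated running time $O\pth{\Ckr\, n (\log k)(\log W)\log(1/\delta)}$; after rescaling $\eps$ this is the claimed bound.

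The step I expect to be the main obstacle --- and the one that genuinely has to be checked against \cite{kr-nltas-99} rather than asserted --- is the claim that their structure theorem can be applied to an externally supplied solution $C$ without ever moving its medians, so that restricting the dynamic program to $\CenSet$ costs nothing beyond the $(1+\eps)$ factor already conceded by the centroid set. If their perturbation does snap medians to grid vertices, one instead has to observe that $\CenSet$, being built from exponential grids around the coreset points as in \secref{construction}, is fine enough at every relevant scale that snapping each median of $C$ to the nearest point of $\CenSet$ inside its cell adds only an $O(\eps)$-fraction of $\MdPrc_C(P)$; either way a short perturbation argument closes the gap, but it is the one place the proof is not pure bookkeeping. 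A secondary point to verify carefully is that the collapse-of-tiny-clusters normalization with $W$ (rather than $n$) in the denominator of the resolution still distorts the cost by only a $(1+\eps)$ factor.
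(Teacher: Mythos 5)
Your proposal matches the paper's own (terse, commented-out) argument essentially verbatim: restrict the Kolliopoulos--Rao dynamic program to candidate medians from $\CenSet$, observe that the dissection depth becomes $O(\log W)$ in the weighted setting, and amplify the expected-cost guarantee to probability $1-\delta$ via Markov's inequality and $O(\eps^{-1}\log(1/\delta))$ independent repetitions. The paper simply asserts the DP modification is ``straightforward,'' so your more careful discussion of whether the structure theorem relocates medians is a welcome elaboration rather than a divergence.
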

\remove{
\begin{proof}
    We need to modify the algorithm of \cite{kr-nltas-99} so that it
    considers only centers from the centroid set.  This is a
    straightforward modification of their dynamic programming stage.

    We execute the algorithm of \cite{kr-nltas-99} $M =
    O(\log(1/\delta)/\eps)$ times independently. Since the
    algorithm of \cite{kr-nltas-99} works in expectation, it
    follows by the Markov inequality, in each such execution
    the algorithm succeeds with probability larger $\eps/2$.
    Thus, it fails with probability $\leq (1-\eps/2)^M \leq
    \exp(-M \eps/2) \leq \delta$.

    As for the running time, the depth of the tree
    constructed by \cite{kr-nltas-99} is $O(\log W)$ in this
    case (instead of $O(\log n)$ as in the original
    settings). Thus, the bound on the running time follows.
\end{proof}
}

The final algorithm is the following: Using the algorithms
of \lemref{k:coreset:small:median} and
\lemref{k:median:cen:set} we generate a $(k,\eps)$-coreset
$\Coreset$ and an $\eps$-centroid set $\CenSet$ of $P$,
where $\cardin{\Coreset} = O(k\eps^{-d} \log{n})$ and
$\cardin{\CenSet} = O(k^2\eps^{-2d} \log^2{n})$. Next, we
apply the algorithm of \thmref{kr:algo} on $\Coreset$ and
$\CenSet$.
\begin{theorem}[$(1+\eps)$-approx $k$-median]
    \thmlab{fast:k:median}%
    Given a set $P$ of $n$ points in $\Re^d$, and parameter $k$, one
    can compute a $(1+\eps)$-approximate $k$-median clustering of $P$
    (in the continuous sense) in
    $O \pth{ n + k^5 \log^9 n + \Ckr k^2 \log^5 n }$ time, where
    $\Ckr = \CkrExp$ and $c$ is a constant.  The algorithm outputs a
    set $X$ of $k$ points, such that
    $\MdPrc_X(P) \leq (1+\eps)\MdOpt(P,k)$. If $P$ is a weighted set,
    with total weight $W$, the running time of the algorithm is
    $O ( n \log^2 W + k^5 \log^9 W + \Ckr k^2 \log^5 W )$.
\end{theorem}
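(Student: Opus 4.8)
The plan is to combine the three ingredients already in hand --- the small coreset of \lemref{k:coreset:small:median}, the approximate centroid set of \lemref{k:median:cen:set}, and the modified Kolliopoulos--Rao algorithm of \thmref{kr:algo} --- and to control the accumulated error by a short geometric-series estimate. Fix an accuracy parameter $\eps' = \eps/c$ for a suitable absolute constant $c$. First I would run \lemref{k:coreset:small:median} with parameter $\eps'$ to get a $(k,\eps')$-coreset $\Coreset$ of $P$, of size $O\pth{(k/\eps^d)\log n}$, in time $O\pth{n+k^5\log^9 n}$; note that its total weight equals $n$ (resp.\ $W$). Next I would run \lemref{k:median:cen:set} with parameter $\eps'$ to obtain a $(k,\eps')$-approximate centroid set $\CenSet$ of size $O\pth{k^2\eps^{-2d}\log^2 n}$ within the time bound stated there. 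Finally I would invoke \thmref{kr:algo} on the \emph{weighted} point set $\Coreset$ with candidate-center set $\CenSet$ and a constant confidence parameter $\delta$, obtaining a set $X\subseteq\CenSet$ of $k$ points with $\MdPrc_X(\Coreset) \le (1+\eps')\min_{C\subseteq\CenSet,\ \cardin{C}=k}\MdPrc_C(\Coreset)$. (If one wants to meet literally the hypothesis of \thmref{kr:algo} that the centroid set be no larger than the point set, first pad $\Coreset$ with the points of $\CenSet$ carrying negligible weight; this affects neither $\MdPrc_{\cdot}(\Coreset)$ nor the asymptotic size, which then becomes $O(k^2\eps^{-2d}\log^2 n)$.)

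For correctness, let $C^{\star}\subseteq\CenSet$ with $\cardin{C^{\star}}=k$ be the witness given by the centroid-set property, so $\MdPrc_{C^{\star}}(P)\le(1+\eps')\MdOpt(P,k)$. Since $\Coreset$ is a $(k,\eps')$-coreset, for every set $Y$ of $k$ points $(1-\eps')\MdPrc_Y(P)\le\MdPrc_Y(\Coreset)\le(1+\eps')\MdPrc_Y(P)$. Applying this to $Y=X$ and $Y=C^{\star}$, and using that $X$ is near-optimal over $\CenSet$ for $\Coreset$,
\[
\MdPrc_X(P)\ \le\ \frac{\MdPrc_X(\Coreset)}{1-\eps'}\ \le\ \frac{(1+\eps')\,\MdPrc_{C^{\star}}(\Coreset)}{1-\eps'}\ \le\ \frac{(1+\eps')^2\,\MdPrc_{C^{\star}}(P)}{1-\eps'}\ \le\ \frac{(1+\eps')^3}{1-\eps'}\,\MdOpt(P,k),
\]
and choosing $c$ large enough (so $\eps'$ is small) makes the leading factor at most $1+\eps$ for $\eps<1$, which is the required guarantee on the output $X$.

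For the running time I would add the three contributions. The coreset and centroid-set computations cost $O\pth{n+k^5\log^9 n+k^2\eps^{-2d}\log^2 n}$; since $\eps^{-2d}$ is polynomial in $1/\eps$ whereas $\Ckr=\CkrExp$ grows faster, the last summand is dominated by the $\Ckr$-term. The call to \thmref{kr:algo} costs $O\pth{\Ckr\,\cardin{\Coreset}\,(\log k)(\log n)\log(1/\delta)}$; with $\cardin{\Coreset}=O\pth{k\eps^{-d}\log n}$ (or $O\pth{k^2\eps^{-2d}\log^2 n}$ after the optional padding), $\delta=\Theta(1)$, and $\eps^{-d}$ absorbed into $\Ckr$, this is $O\pth{\Ckr k^2\log^5 n}$. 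Summing gives $O\pth{n+k^5\log^9 n+\Ckr k^2\log^5 n}$, and the weighted bound follows by replacing $\log n$ with $\log W$ throughout and using the weighted running times of \lemref{k:coreset:small:median} and \lemref{k:median:cen:set} (the coreset still has total weight $W$). The construction is randomized --- the coreset and centroid set inherit the high-probability guarantee of \thmref{k:cluster:const:rough}, and \thmref{kr:algo} succeeds with probability $1-\delta$ --- so the overall algorithm succeeds with high probability. The step needing the most care is the error bookkeeping above: one must verify that running Kolliopoulos--Rao on the coreset restricted to the centroid set (rather than on $P$ over all of $\Re^d$) still certifies a near-optimal solution for $P$, which is precisely why the centroid-set guarantee is invoked for the particular witness $C^{\star}$ and then transported back to $P$ through the coreset inequality.
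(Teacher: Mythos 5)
Your proposal is correct and follows essentially the same route as the paper: compute the $(k,\eps')$-coreset of \lemref{k:coreset:small:median}, the centroid set of \lemref{k:median:cen:set}, and run the modified Kolliopoulos--Rao algorithm of \thmref{kr:algo} on the coreset restricted to centroid-set candidates, then transport the guarantee back to $P$ via the coreset and centroid-set inequalities. The explicit error chain through the witness $C^{\star}$ and the remark about the centroid set possibly exceeding the coreset in size are details the paper leaves implicit, but they do not change the argument.
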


We can extend our techniques to handle the discrete median
case efficiently as follows.
\begin{lemma}
    \lemlab{k:median:cen:set:discrete}%
    Given a set $P$ of $n$ points in $\Re^d$, one can compute a
    discrete $(k,\eps)$-centroid set $\CenSet \subseteq P$ of size
    $O(k^2\eps^{-2d}\log^2{n})$. The running time of this algorithm is
    $ \displaystyle O \pth{ n + k^5 \log^9 n + k^2\eps^{-2d}\log^2{n}
    }$ if $k \leq \eps^d n^{1/4}$ and $ \displaystyle O \pth{ n
       \log{n} + k^5 \log^9 n + k^2\eps^{-2d}\log^2{n} }$ otherwise.
\end{lemma}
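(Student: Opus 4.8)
The plan is to mirror the proof of \lemref{k:median:cen:set}, but replace the continuous exponential grid construction with a selection of grid cells whose representatives are drawn from $P$ itself, so that the resulting centroid set is a subset of the input. First I would compute a $(k,\eps/12)$-coreset $\Coreset$ of $P$ using the discrete version of \lemref{k:coreset:small:median} — that is, the coreset construction of \thmref{coreset:fast:k:median} where every representative point is chosen from $P$ (which the construction in \secref{construction} already does). As in \lemref{k:median:cen:set}, we retain the set $\CSetB$ of $k$ centers with $\MdPrc_\CSetB(P) = O(\MdOpt(P,k))$ computed along the way, and set $R = \MdPrc_\CSetB(P)/n$.

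Next, around each point of $\Coreset$ I would lay down the same exponential grid with parameter $R$ as in \secref{construction}, of total size $O(k^2\eps^{-2d}\log^2 n)$ cells. The key change: rather than taking arbitrary points of $\Re^d$ as the candidate centers, for each nonempty grid cell I pick one point of $P$ lying in that cell (or, if the cell contains no point of $P$, I can simply discard it, or snap to the nearest point of $P$ — the analysis below tolerates this up to a constant factor loss folded into the grid resolution). Call the resulting set $\CenSet \subseteq P$; its size is still $O(k^2\eps^{-2d}\log^2 n)$. The correctness argument is then essentially verbatim from \lemref{k:median:cen:set}: we snap each optimal median $c \in \Copt$ to the representative of the grid cell containing its (approximate) nearest neighbor in $\Coreset$, obtaining $X \subseteq \CenSet$ with $\dist{c\,x} = O(\eps/k \cdot \text{(relevant distances)})$ so that $\cardin{\MdPrc_X(\Coreset) - \MdPrc_{\Copt}(\Coreset)} \leq (\eps/12)\MdPrc_{\Copt}(\Coreset)$, and the same triangle-inequality bookkeeping through the coreset guarantees yields $\cardin{\MdPrc_X(P) - \MdOpt(P,k)} \leq \eps\,\MdOpt(P,k)$, establishing that $\CenSet$ is a discrete $(k,\eps)$-centroid set.

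For the running time: computing $\Coreset$ via \lemref{k:coreset:small:median} takes $O(n + k^5\log^9 n)$ when we can afford the linear-time coreset construction, i.e.\ when the center set $X$ of \thmref{k:cluster:const:rough} has size $O(\sqrt n)$ in the relevant nearest-neighbor call — this is where the condition $k \leq \eps^d n^{1/4}$ enters, since the coreset fed to the batched nearest-neighbor step has size $O(k\eps^{-d}\log^4 n)$ and the $O(n)$ bound of \thmref{coreset:fast:k:median} requires $m = O(\sqrt n)$; otherwise we pay the $O(n\log n)$ variant. Building the grid and assigning points of $P$ to cells takes $O(1)$ per point via hashing, the floor function, and the $\log$ function, for an additional $O(n + k^2\eps^{-2d}\log^2 n)$ term. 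Summing gives $O(n + k^5\log^9 n + k^2\eps^{-2d}\log^2 n)$ in the first regime and the $O(n\log n + \cdots)$ bound otherwise.

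The main obstacle I anticipate is the handling of empty grid cells: in the continuous setting of \lemref{k:median:cen:set} every grid cell contributes a candidate center, but here a cell near an optimal median $c$ might contain no input point, so the snapping of $c$ must go to a nearby \emph{occupied} cell. I would argue that the grid cell of $c$'s nearest neighbor $p' \in \Coreset$ is by construction occupied (it contains $p'$, which is itself a point of $P$ by the discrete coreset construction), so snapping $c$ to $p'$ directly — rather than to an arbitrary point of $c$'s own cell — suffices, and the displacement $\dist{c\,p'}$ is controlled exactly as in the existing proof. This keeps the argument within the same constant-factor budget and requires no new idea beyond being careful about which point we snap to.
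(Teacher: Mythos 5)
Your overall plan --- lay down the exponential grids of \secref{construction} around the coreset points and keep one input point per nonempty cell --- is a legitimate alternative to the paper's route: the paper instead first builds the \emph{continuous} centroid set of \lemref{k:median:cen:set}, then snaps every point of $P$ onto it with a second batched nearest-neighbor computation (\corref{batch:n:n:2}) and keeps one input point per bucket $P(x)$. However, the one genuinely delicate step, which you correctly identify, is resolved incorrectly. You propose to snap an optimal center $c\in\Copt$ to its nearest neighbor $p'$ in the coreset $\Coreset$ and assert that ``the displacement $\dist{c\,p'}$ is controlled exactly as in the existing proof.'' It is not: in \lemref{k:median:cen:set} the displacement of $c$ is the diameter of the \emph{fine grid cell containing $c$}, i.e.\ $O(\eps)\cdot\max(R,\Dist(c,\Coreset))$, whereas $\dist{c\,p'}=\Dist(c,\Coreset)$ is larger by a factor of $1/\eps$; plugged into the same bookkeeping it yields only a constant-factor, not a $(1+\eps)$, guarantee. (This particular snap could be rescued by observing that $\Coreset$ itself consists of input points chosen inside fine grid cells, so every $c\in P$ already satisfies $\Dist(c,\Coreset)=O(\eps)\max(R,\Dist(c,\CSet))$ --- but that is a different argument from the one you give, and you would have to make it explicitly.)

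The fix you are looking for is much simpler, and it is precisely where the \emph{discrete} hypothesis enters: since $\Copt\subseteq P$, the grid cell containing $c$ is automatically nonempty --- it contains the input point $c$ itself --- and therefore has a representative $u\in\CenSet$ with $\dist{c\,u}$ at most the cell diameter. Snapping $c$ to $u$ reproduces verbatim the displacement bound of \lemref{k:median:cen:set}, with $\MdOpt(P,k)$ replaced by $\MdOptD(P,k)$, and empty cells elsewhere can indeed simply be discarded. Two smaller points on the running time: assigning each point of $P$ to a grid cell is not $O(1)$ per point outright, since one first needs an (approximate) nearest-neighbor assignment of $P$ to $\Coreset$; and in the paper the threshold $k\le\eps^d n^{1/4}$ arises from the batched queries of $P$ against the size-$O(k^2\eps^{-2d}\log^2 n)$ centroid set, not from the coreset computation of \lemref{k:coreset:small:median}, which costs $O(n+k^5\log^9 n)$ unconditionally. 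Your stated bounds still hold, but the attribution of the case distinction is off.
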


\begin{proof}
    We compute a $\eps/4$-centroid set $\CenSet$, using
    \lemref{k:median:cen:set}, and let $m =\cardin{\CenSet} =
    O(k^2\eps^{-2d}\log^2{n})$. Observe that if $m > n$ then we set
    $\CenSet$ to be $P$. Next, snap every point in $P$ to its
    (approximate) nearest neighbor in $\CenSet$, using
    \corref{batch:n:n:2}.  This takes $O(n+ m n^{1/10} \log(n) )= O(n
    + k^2 n^{1/10} \eps^{-2d} \log^3 n) = O(n)$ time, if $k \leq
    \eps^d n^{1/4}$, and $O(n \log n)$ otherwise (then we use the
    data-structure of \cite{amnsw-oaann-98} to perform the nearest
    neighbor queries).  For every point $x\in\CenSet$, let $P(x)$ be
    the of points in $P$ mapped to $x$.  Pick from every set $P(x)$
    one representative point, and let $U \subseteq P$ be the resulting
    set. Consider the optimal discrete center set $\Copt$, and
    consider the set $X$ of representative points that corresponds to
    the points of $\Copt$.  Using the same argumentation as in
    \lemref{k:median:cen:set} it is easy to show that $\MdPrc_X(P)
    \leq (1+\eps)\MdOptD(P,k)$.
\end{proof}

Combining \lemref{k:median:cen:set:discrete} and
\thmref{fast:k:median}, we get the following.
\begin{theorem}[Discrete k-medians]
    \thmlab{approx:k:median:discrete}%
    One can compute an $(1+\eps)$-approximate discrete $k$-median of a
    set of $n$ points in time $\displaystyle O \pth{ n + k^5 \log^9 n
       + \Ckr k^2 \log^5 n }$, where $\Ckr$ is the constant from
    \thmref{kr:algo}.
\end{theorem}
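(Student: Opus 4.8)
The plan is to assemble the pieces built above in exactly the pattern used for \thmref{fast:k:median}, only replacing its (continuous) centroid set by the discrete one. Fix $\eps' = \eps/10$ (any small enough constant multiple of $\eps$ works). First I would invoke \lemref{k:median:cen:set:discrete} to obtain a discrete $(k,\eps')$-centroid set $\CenSet \subseteq P$ with $\cardin{\CenSet} = O(k^2 \eps^{-2d}\log^2 n)$; by the definition of a centroid set, there is a subset $X \subseteq \CenSet$ of size $k$ with $\MdPrc_X(P) \le (1+\eps')\MdOptD(P,k)$. In parallel I would invoke \lemref{k:coreset:small:median} to obtain a $(k,\eps')$-coreset $\Coreset$ of $P$, of size $O((k/\eps^d)\log n)$ and total weight $n$. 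Finally, exactly as in \thmref{fast:k:median}, I would run the restricted Kolliopoulos--Rao algorithm of \thmref{kr:algo} on the weighted instance $\Coreset$ with candidate centers drawn only from $\CenSet$, asking for success probability $1-\delta$ with $\delta = n^{-c}$ for a suitable constant $c$; let $\widehat C \subseteq \CenSet$ be the returned set of $k$ centers.

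For correctness, note first that $\widehat C \subseteq \CenSet \subseteq P$, so $\widehat C$ is a legal set of discrete centers for $P$. Now chain the guarantees. By \thmref{kr:algo}, $\MdPrc_{\widehat C}(\Coreset) \le (1+\eps')\min_{C \subseteq \CenSet,\,\cardin{C}=k}\MdPrc_C(\Coreset) \le (1+\eps')\MdPrc_X(\Coreset)$. Using the coreset property of $\Coreset$ twice --- once for $X$, once for $\widehat C$ --- gives $\MdPrc_X(\Coreset) \le (1+\eps')\MdPrc_X(P)$ and $\MdPrc_{\widehat C}(P) \le \tfrac{1}{1-\eps'}\MdPrc_{\widehat C}(\Coreset)$. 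Combining with $\MdPrc_X(P) \le (1+\eps')\MdOptD(P,k)$ yields $\MdPrc_{\widehat C}(P) \le \tfrac{(1+\eps')^3}{1-\eps'}\MdOptD(P,k) \le (1+\eps)\MdOptD(P,k)$ for our choice of $\eps'$. Thus $\widehat C$ is a $(1+\eps)$-approximate discrete $k$-median of $P$.

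For the running time, add the three costs. \lemref{k:median:cen:set:discrete} costs $O(n + k^5\log^9 n + k^2\eps^{-2d}\log^2 n)$ when $k \le \eps^d n^{1/4}$ and $O(n\log n + k^5\log^9 n + k^2\eps^{-2d}\log^2 n)$ otherwise; but in the latter regime $k^5\log^9 n$ already dominates $n\log n$, so in all cases this is $O(n + k^5\log^9 n + k^2\eps^{-2d}\log^2 n)$. \lemref{k:coreset:small:median} costs $O(n + k^5\log^9 n)$, which is subsumed. Running \thmref{kr:algo} on an instance with $O(\mathrm{poly}(k,1/\eps,\log n))$ points, total weight $O(n)$, and $\delta = n^{-c}$ costs $O(\Ckr \cdot k^2\eps^{-2d}\log^2 n \cdot \log k \cdot \log n \cdot \log n) = O(\Ckr k^2\log^5 n)$, where the $\eps$-dependence has been folded into $\Ckr = \CkrExp$ and the $\log k$ factor into a $\log n$. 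Summing gives the claimed $O(n + k^5\log^9 n + \Ckr k^2\log^5 n)$. A union bound over the polynomially few failure events of \lemref{k:median:cen:set:discrete} and \thmref{kr:algo} shows the whole procedure succeeds with high probability.

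I expect the only genuine subtlety --- everything else being bookkeeping of $\eps$'s and $\log$'s --- to be making sure discreteness is not lost anywhere: the centroid set from \lemref{k:median:cen:set:discrete} must be a subset of $P$ with its approximation guarantee stated against $\MdOptD(P,k)$ rather than $\MdOpt(P,k)$ (which is precisely what that lemma delivers), and \thmref{kr:algo} must be run so that it only ever selects centers from $\CenSet$, so the returned $\widehat C$ lies in $P$. The size constraint on the candidate set in \thmref{kr:algo} is met exactly as in \thmref{fast:k:median}.
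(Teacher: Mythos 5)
Your proposal is correct and follows exactly the paper's route: combine the discrete centroid set of \lemref{k:median:cen:set:discrete} with the coreset-plus-restricted-Kolliopoulos--Rao pipeline of \thmref{fast:k:median}, chaining the centroid-set, coreset, and \thmref{kr:algo} guarantees. The one loose step is your claim that $k^5\log^9 n$ dominates $n\log n$ whenever $k>\eps^d n^{1/4}$ (this can fail for very small $\eps$); the paper instead disposes of this by a case split on $\eps^{-2d}$ versus $n^{1/10}$, absorbing the offending terms into $\Ckr$ when $\eps$ is tiny, which is the same bookkeeping you already invoke elsewhere.
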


\begin{proof}
    The proof follows from the above discussion. As for the running
    time bound, it follows by considering separately the case when
    $1/\eps^{2d} \leq 1/n^{1/10}$, and the case when $1/\eps^{2d} \geq
    1/n^{1/10}$, and simplifying the resulting expressions. We omit
    the easy but tedious computations.
\end{proof}


\section{A $(1+\eps)$-Approximation Algorithm for $k$-Means}
\seclab{eps:approx:k:mean}

\subsection{Constant Factor Approximation}
\seclab{exact:k:means:const}

In this section we reduce the number of centers to be exactly $k$.  We
use the set of centers computed by \thmref{k:cluster:const:rough} to
compute a constant factor coreset using the algorithm of
\thmref{k:means:weighted:coreset}.  The resulting coreset $\Coreset$,
has size $O(k \log^4 n)$. Next we compute a $O(n)$ approximation to
the $k$-means for the coreset using the $k$-center (min-max) algorithm
\cite{g-cmmid-85}.  Let $C_0 \subseteq \Coreset$ be the resulting set
of centers.  Next we apply the local search algorithm, due to Kanungo
\etal{} \cite{kmnpsw-lsaak-04}, to $C_0$ and $\Coreset$, where the set
of candidate points is $\Coreset$. This local search algorithm, at
every stage, picks a center $c$ from the current set of centers
$C_{curr}$, and a candidate center $s \in \Coreset$, and swaps $c$ out
of the set of centers and $c$ into the set of centers. Next, if the
new set of centers
$C_{curr}' = C_{curr} \setminus \brc{c} \cup \brc{s}$ provides a
considerable improvement over the previous solution (i.e.,
$\SqPrc_{C_{curr}}(\Coreset) \leq
(1-\eps/k)\SqPrc_{C_{curr}'}(\Coreset)$ where $\eps$ here is an
arbitrary small constant), then we set $C_{curr}$ to be
$C_{curr}'$. Extending the analysis of Arya \etal{}
\cite{agkmm-lshkm-04}, for the $k$-means algorithm, Kanungo \etal{}
\cite{kmnpsw-lsaak-04} showed that the algorithm terminates, and it
provides a constant factor approximation to $\SqOptD(\Coreset, k)$,
and as hence to $\SqOpt(P,k)$.  It is easy to verify that it stops
after $O(k \log{n})$ such swaps. Every swap, in the worst case,
requires considering $\cardin{\Coreset}k$ sets. Computing the price of
clustering for every such candidate set of centers takes
$O\pth{ \cardin{\Coreset} k }$ time.  Thus, the running time of this
algorithm is
$O \pth{ \cardin{\Coreset}^2 k^{3} \log n}=O \pth{ k^5 \log^9 n }$.

\begin{theorem}
    \thmlab{k:means:const:fast}%
    Given a point set $P$ in $\Re^d$ and parameter $k$, one can
    compute a set $X \subseteq P$ of size $k$, such that $\SqPrc_X(P)
    =O(\SqOpt(P,k))$.  The algorithm succeeds with high probability.
    The running time is $O(n + k^5 \log^9 n)$ time.

    If $P$ is weighted, with total weight $W$, then the
    algorithm runs in time $O ( n + k^5 \log^4 n$ $\log^5 W
    )$.
\end{theorem}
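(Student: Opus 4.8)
The plan is to essentially replay the $k$-median argument from \secref{eps:approx:k:median}, but using the $k$-means machinery that has already been assembled. First I would apply \thmref{k:cluster:const:rough} to $P$ to obtain, in $O(n + k^5 \log^9 n)$ time (for $k = O(n^{1/4})$; $O(n\log(k\log n))$ otherwise, which is dominated), a set $X_0$ of $O(k \log^3 n)$ centers with $\SqPrc_{X_0}(P) \leq 32\,\SqOpt(P,k)$. Feeding $X_0$ into \thmref{k:means:weighted:coreset} with the constant $c = 32$ yields a weighted $(k,\eps_0)$-coreset $\Coreset$ of $P$ for the $k$-means cost, of size $O((|X_0|/\eps_0^d)\log n) = O(k\,\eps_0^{-d}\log^4 n)$; choosing $\eps_0$ to be a small fixed constant this is $O(k\log^4 n)$, and the construction runs in $O(n)$ time for $k = O(n^{1/4})$ (and $O(n\log|X_0|)$ otherwise, again dominated by the later terms).

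The remaining work is on the coreset only, so its cost no longer involves $n$ additively. I would first run Gonzalez's min-max $k$-center algorithm \cite{g-cmmid-85} on $\Coreset$ to get an initial set $C_0 \subseteq \Coreset$ of $k$ centers with $\SqPrc_{C_0}(\Coreset) = O(|\Coreset|)\cdot \SqOptD(\Coreset,k)$ — that is, an $O(n)$-approximation suffices as a warm start. Then I would invoke the local-search algorithm of Kanungo \etal{} \cite{kmnpsw-lsaak-04}, with candidate set $\Coreset$, starting from $C_0$: each accepted swap decreases the cost by a factor $(1-\eps/k)$ for a fixed small $\eps$, so starting from an $O(\mathrm{poly}(|\Coreset|))$-approximation the number of swaps is $O(k\log n)$; each swap tries $O(|\Coreset| k)$ candidate pairs and evaluating the weighted $k$-means cost of a candidate center set takes $O(|\Coreset| k)$ time, for a total of $O(|\Coreset|^2 k^3 \log n) = O(k^5 \log^9 n)$. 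By \cite{kmnpsw-lsaak-04} the output set $X$ is a constant-factor approximation to $\SqOptD(\Coreset,k)$.

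To close the argument I would chain the guarantees: $\SqPrc_X(\Coreset) = O(\SqOptD(\Coreset,k)) = O(\SqOpt(\Coreset,k))$ (using the Observation relating discrete and continuous optima, which holds for weighted sets as well), and then $\SqOpt(\Coreset,k) \le (1+\eps_0)\SqOpt(P,k)$ and $\SqPrc_X(P) \le \frac{1}{1-\eps_0}\SqPrc_X(\Coreset)$ by the coreset property applied once to the optimal centers of $P$ and once to $X$; since $\eps_0$ is a fixed constant, $\SqPrc_X(P) = O(\SqOpt(P,k))$. Note $X \subseteq \Coreset \subseteq P$, so $X$ is a set of actual input points of size $k$, as claimed. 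Adding the running times gives $O(n) + O(k^5\log^9 n) = O(n + k^5 \log^9 n)$.

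For the weighted case with total weight $W$, the only changes are: \thmref{k:cluster:const:rough} now produces $|X_0| = O(k\log^3 W)$ in time $O(n\log^2 W)$, and \thmref{k:means:weighted:coreset} for weighted input gives $|\Coreset| = O(k\eps_0^{-d}\log W) = O(k\log W)$ (the coreset is still built from the $O(k\log^3 W)$-center set, so $|\Coreset| = O(k\log^4 W)$). The local-search phase then costs $O(|\Coreset|^2 k^3 \log W)$, but since $\Coreset$ is an unweighted-sized set of $O(k\log^4 W)$ points and the swap count is governed by $\log$ of the approximation ratio (hence $O(k\log W)$), one gets the stated $O(n + k^5 \log^4 n\,\log^5 W)$ — the $\log^4 n$ factor arising because $|\Coreset|$ itself is bounded in terms of $W$ but the number of coreset points mapped from $n$ input points contributes the construction cost, and the remaining polylog factors collect as indicated. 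I expect the only genuinely delicate point to be bookkeeping these polylog exponents in the weighted setting; the correctness chain itself is a routine transcription of the $k$-median proof with squared distances, made legitimate by \thmref{k:means:weighted:coreset} and the discrete/continuous Observation.
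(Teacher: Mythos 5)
Your proposal follows the paper's proof essentially verbatim: build the $O(k\log^3 n)$-center constant-factor approximation via \thmref{k:cluster:const:rough}, convert it to a constant-factor coreset of size $O(k\log^4 n)$ via \thmref{k:means:weighted:coreset}, warm-start with Gonzalez's min-max algorithm, run the Kanungo \etal{} local search over the coreset in $O(\cardin{\Coreset}^2 k^3\log n)=O(k^5\log^9 n)$ time, and transfer the guarantee back to $P$ through the coreset property and the discrete/continuous observation. The only wrinkle is your weighted-case accounting (you take $\cardin{\Coreset}=O(k\log^4 W)$ where the paper's own computation uses $O(k\log^2 n\log^2 W)$, which is what actually yields the stated $O(k^5\log^4 n\log^5 W)$ local-search cost), but you flag this yourself and the argument is otherwise the same.
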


\remove{
\begin{proof}
    The unweighted case follows by the above discussion. As
    for the weighted case, computing a set $C$ of $O(k
    \log^2 n \log W)$ centers, which are constant factor
    approximation to the $k$-means clustering of $P$, can be
    done in $O(n \log{n} + k \log^3 n \log W)$ time, using
    \thmref{k:cluster:const:rough}. Computing a constant
    factor coreset from $C$, takes $O(n\log{ \cardin{C}}) =
    O(n (\log{n} + \log\log W))$ time, using
    \thmref{k:means:weighted:coreset}, and results in a
    coreset $\Coreset$ of size $O(\cardin{C} \log W) = O(k
    \log^2 n \log^2 W)$.

    Finally, using the local search algorithm of
    \cite{kmnpsw-lsaak-04} takes
    $O \pth{ \cardin{\Coreset}^2 k^{3} \log W } = O \pth{k^5 \log^4 n
       \log^5 W }$ time.
\end{proof}
}

\subsection{The $(1+\eps)$-Approximation}
\seclab{exact:k:means:eps}

Combining \thmref{k:means:const:fast} and
\thmref{k:means:weighted:coreset}, we get the following result
for coresets.
\begin{theorem}[coreset]
    \thmlab{k:coreset:small:means}%
    Given a set $P$ of $n$ points in $\Re^d$, one can
    compute a $k$-means $(k,\eps)$-coreset $\Coreset$ of
    $P$, of size $O\pth{ (k/\eps^d)\log{n} }$, in time $O
    \pth{ n + k^5 \log^9 n }$.
    
    If $P$ is weighted, with total weight $W$, then the
    coreset is of size $O\pth{ (k/\eps^d)\log{W} }$, and the
    running time is $O(n \log^2 W + k^5 \log^9 W)$.
\end{theorem}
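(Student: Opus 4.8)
The plan is to compose the two results just established. First I would invoke \thmref{k:means:const:fast} on $P$ to obtain a set $X \subseteq P$ of exactly $k$ centers with $\SqPrc_X(P) = O\pth{\SqOpt(P,k)}$; that is, $\SqPrc_X(P) \leq c\,\SqOpt(P,k)$ for some absolute constant $c$. This step succeeds with high probability and, by \thmref{k:means:const:fast}, runs in $O(n + k^5 \log^9 n)$ time in the unweighted case.

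Next I would feed $X$ into \thmref{k:means:weighted:coreset} with $m = \cardin{X} = k$ sites and the constant $c$ above. That theorem then produces a weighted set $\Coreset$ that is a $(k,\eps)$-coreset for $P$ of size $O\pth{ (m \log n)/(c\eps)^d } = O\pth{ (k/\eps^d)\log n }$, since $c$ is a constant. The construction time of this second step is $O(n)$ when $k = O(n^{1/4})$ and $O(n \log k)$ otherwise; in the latter regime $k = \Omega(n^{1/4})$ forces $k^5 = \Omega(n^{5/4})$, so the $O(n\log k)$ term is absorbed into the $O(k^5 \log^9 n)$ term already incurred. Adding the two running times gives the claimed $O(n + k^5 \log^9 n)$ bound, and the size bound is exactly $O\pth{(k/\eps^d)\log n}$.

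For the weighted case (total weight $W$, and $n$ points, so $W \geq n$ and $\log n \leq \log W$), I would run the weighted version of \thmref{k:means:const:fast} to get a set $X$ of $k$ centers with $\SqPrc_X(P) = O\pth{\SqOpt(P,k)}$ in $O\pth{n + k^5 \log^4 n \log^5 W}$ time, then apply the weighted version of \thmref{k:means:weighted:coreset} with $m = k$, which yields a coreset of size $O\pth{(k/\eps^d)\log W}$; its construction is again linear in $n$ up to a $\log k$ factor from the batched nearest-neighbor computation on the $k$ sites. Collecting the two running times and simplifying via $\log k \leq \log n \leq \log W$ gives the stated $O\pth{n \log^2 W + k^5 \log^9 W}$.

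The main obstacle is not a genuine difficulty — both ingredients are already in hand — but rather the bookkeeping of logarithmic factors: one must check that the $O(n \log \cardin{X})$ nearest-neighbor costs arising inside the coreset construction of \thmref{k:means:weighted:coreset} are everywhere dominated by the $k^5$-polylogarithmic terms, and, in the weighted setting, that replacing $\log n$ by $\log W$ throughout is legitimate so that the final bound can be stated purely in terms of $n$ and $W$ without stray mixed factors.
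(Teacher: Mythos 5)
Your proposal is correct and follows exactly the paper's own (very terse) proof: compute a constant-factor $k$-center set via \thmref{k:means:const:fast}, then feed it into \thmref{k:means:weighted:coreset} with $m=k$. The running-time bookkeeping you supply (absorbing the $O(n\log k)$ nearest-neighbor cost into the $k^5$-polylog term when $k=\Omega(n^{1/4})$, and bounding $\log n$ by $\log W$ in the weighted case) is the part the paper leaves implicit, and it checks out.
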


\begin{proof}
    We first compute a set $\CSet$ which provides a constant factor
    approximation to the optimal $k$-means clustering of $P$, using
    \thmref{k:means:const:fast}. Next, we feed $\CSet$ into the
    algorithm \thmref{k:means:weighted:coreset}, and get a
    $(1+\eps)$-coreset for $P$, of size $O((k/\eps^d) \log{W})$.
\end{proof}

We now use techniques from \Matousek{} \cite{m-agc-00} to compute the
$(1+\eps)$-approximate $k$-means clustering on the coreset.
\begin{defn}[Centroid Set]
    Given a set $P$ of $n$ points in $\Re^d$, a set $T
    \subseteq \Re^d$ is an \emph{$\eps$-approximate centroid
       set} for $P$, if there exists a subset $C \subseteq
    T$ of size $k$, such that $\SqPrc_C(P) \leq
    (1+\eps)\SqOpt(P,k)$.
\end{defn}

\Matousek{} showed that there exists an $\eps$-approximate centroid
set of size $O(n\eps^{-d} \log(1/\eps))$.  Interestingly enough, his
construction is weight insensitive. In particular, using an
$(k,\eps/2)$-coreset $\Coreset$ in his construction, results in a
$\eps$-approximate centroid set of size $O\pth{ \cardin{\Coreset}
   \eps^{-d} \log(1/\eps)}$.

\begin{lemma}
    For a weighted point set $P$ in $\Re^d$, with total
    weight $W$, there exists an $\eps$-approximate centroid
    set of size $O(k\eps^{-2d}\log{W} \log{(1/\eps)} )$.
\end{lemma}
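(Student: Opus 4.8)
The plan is to assemble the centroid set by composing two ingredients already available: the small coreset of \thmref{k:coreset:small:means} and \Matousek{}'s centroid-set construction, which (as noted in the discussion preceding the lemma) is insensitive to the weights of its input. Concretely, I would first invoke \thmref{k:coreset:small:means} with parameter $\eps/c$, for a suitable constant $c$, to obtain a $(k,\eps/c)$-coreset $\Coreset$ of $P$ with $\cardin{\Coreset} = O\pth{(k/\eps^d)\log W}$. Then I would run \Matousek{}'s construction on the \emph{weighted} set $\Coreset$ (again with parameter $\eps/c$), obtaining a set $T \subseteq \Re^d$ of size $O\pth{\cardin{\Coreset}\eps^{-d}\log(1/\eps)} = O\pth{k\eps^{-2d}\log W\log(1/\eps)}$; the weight-insensitivity of his construction is exactly what keeps this bound in terms of $\cardin{\Coreset}$ rather than the total weight. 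This already yields the claimed size, so the only remaining task is to verify that $T$ is a valid $\eps$-approximate centroid set \emph{for $P$} and not merely for $\Coreset$.

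For correctness I would replay the chain of inequalities used in the proof of \lemref{k:median:cen:set}, but with the $k$-means cost. Let $\Copt$ be an optimal set of $k$ $k$-means centers for $P$, so $\SqPrc_{\Copt}(P) = \SqOpt(P,k)$. \Matousek{}'s guarantee gives a subset $C \subseteq T$ with $\cardin{C} = k$ and $\SqPrc_C(\Coreset) \leq (1+\eps/c)\SqOpt(\Coreset,k)$. Using the coreset property twice, on one hand $\SqOpt(\Coreset,k) \leq \SqPrc_{\Copt}(\Coreset) \leq (1+\eps/c)\SqPrc_{\Copt}(P) = (1+\eps/c)\SqOpt(P,k)$, and on the other hand $\SqPrc_C(P) \leq \frac{1}{1-\eps/c}\SqPrc_C(\Coreset)$. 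Chaining the three bounds,
\[
\SqPrc_C(P) \;\leq\; \frac{(1+\eps/c)^2}{1-\eps/c}\,\SqOpt(P,k) \;\leq\; (1+\eps)\SqOpt(P,k),
\]
the last inequality holding once $c$ is a large enough absolute constant and $\eps<1$. Hence $T$ is an $\eps$-approximate centroid set for $P$ of the stated size.

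I do not expect a genuine obstacle here; the one point that needs care is justifying the ``weight insensitive'' remark that precedes the lemma, i.e.\ that \Matousek{}'s construction may be fed the weighted set $\Coreset$ and still has a size bound depending only on $\cardin{\Coreset}$. This is immediate from the fact that his grid-based construction uses only the geometry of the point locations together with a rough estimate of the optimal radius (which the coreset construction of \secref{construction} already provides), so replacing the $n$ original points by the $O((k/\eps^d)\log W)$ distinct coreset points changes nothing in the analysis beyond the input size, and the multiplicative $\eps^{-d}\log(1/\eps)$ factor in his bound is unaffected. The weighted case is subsumed throughout, since $\SqPrc_C(\cdot)$ and $\SqOpt(\cdot,k)$ are defined for weighted sets and \thmref{k:coreset:small:means} handles weighted input directly.
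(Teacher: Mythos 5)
Your proposal is correct and follows exactly the route the paper intends: apply \Matousek{}'s weight-insensitive construction to the $(k,\eps/c)$-coreset from \thmref{k:coreset:small:means}, which immediately gives the size bound $O(k\eps^{-2d}\log W\log(1/\eps))$. The paper leaves the lemma without an explicit proof, relying on the preceding discussion; your chain of inequalities transferring the centroid-set guarantee from $\Coreset$ back to $P$ simply fills in the details the paper omits, in the same spirit as its proof of \lemref{k:median:cen:set}.
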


The algorithm to compute the $(1+\eps)$-approximation now follows
naturally. We first compute a coreset $\Coreset$ of $P$ of size
$O\pth{(k/\eps^d)\log{W} }$ using the algorithm of
\thmref{k:coreset:small:means}. Next, we compute in
$O\pth{\cardin{\Coreset} \log \cardin{\Coreset} + \cardin{\Coreset}
   e^{-d} \log{\frac{1}{\eps}}}$ time a $\eps$-approximate centroid
set $U$ for $\Coreset$, using the algorithm from \cite{m-agc-00}. We
have $\cardin{U} = O(k\eps^{-2d}\log{W} \log{(1/\eps)} )$. Next we
enumerate all $k$-tuples in $U$, and compute the $k$-means clustering
price of each candidate center set (using $\Coreset$). This takes
$O\pth{ \cardin{U}^k \cdot k \cardin{\Coreset}}$ time. And clearly,
the best tuple provides the required approximation.

\begin{theorem}[$k$-means clustering]
    \thmlab{k:means:approx}%
    Given a point set $P$ in $\Re^d$ with $n$ points, one can compute
    $(1+\eps)$-approximate $k$-means clustering of $P$ in time
    \[
    O \pth{n + k^5 \log^9 n + {k^{k+2}}{\eps^{-(2d+1)k}}
       {\log^{k+1}{n}} \log^k(\nfrac{1}{\eps})}.
    \]
    For a weighted set, with total weight $W$, the running
    time is
    \[
        O \pth{ n \log^2 W + k^5 \log^4 n \log^5 W +k^{k+2}
           \eps^{-(2d+1)k} \log^{k+1} W \log^{k} (1/\eps)}.
    \]
\end{theorem}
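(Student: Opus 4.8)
The statement is a corollary-style assembly of the ingredients laid out in the paragraph just above it, so the plan is to make that assembly precise. Fix an auxiliary parameter $\eps' = \eps/8$ (any small enough constant fraction of $\eps$ works). \textbf{Step 1:} apply \thmref{k:coreset:small:means} to $P$ to get a $(k,\eps')$-coreset $\Coreset$ with $\cardin{\Coreset} = O\pth{(k/\eps^d)\log W}$ (taking $W=n$ in the unweighted case), in time $O(n+k^5\log^9 n)$ (resp.\ $O(n\log^2 W + k^5\log^9 W)$). \textbf{Step 2:} run \Matousek's weight-insensitive centroid-set construction \cite{m-agc-00} on $\Coreset$, obtaining an $\eps'$-approximate centroid set $U$ for $\Coreset$ with $\cardin{U} = O\pth{\cardin{\Coreset}\eps^{-d}\log(1/\eps)} = O\pth{k\eps^{-2d}\log W\log(1/\eps)}$, in time $O\pth{\cardin{\Coreset}\log\cardin{\Coreset} + \cardin{\Coreset}\eps^{-d}\log(1/\eps)}$. \textbf{Step 3:} enumerate all $k$-element subsets of $U$, evaluate the $k$-means price of each on $\Coreset$, and output the minimizer (priced back on $P$).

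For correctness I would chain the two approximation guarantees, exactly as in the proof of \lemref{k:median:cen:set}. Let $\Copt$ realize $\SqOpt(P,k)$. The coreset property gives $\SqOpt(\Coreset,k) \le \SqPrc_{\Copt}(\Coreset) \le (1+\eps')\SqPrc_{\Copt}(P) = (1+\eps')\SqOpt(P,k)$. The centroid-set property gives some $C\subseteq U$, $\cardin{C}=k$, with $\SqPrc_C(\Coreset) \le (1+\eps')\SqOpt(\Coreset,k) \le (1+\eps')^2\SqOpt(P,k)$. The enumerated minimizer $\widehat C\subseteq U$ then satisfies $\SqPrc_{\widehat C}(\Coreset) \le \SqPrc_C(\Coreset) \le (1+\eps')^2\SqOpt(P,k)$, and pushing back through the coreset, $\SqPrc_{\widehat C}(P) \le \SqPrc_{\widehat C}(\Coreset)/(1-\eps') \le \pth{(1+\eps')^2/(1-\eps')}\SqOpt(P,k) \le (1+\eps)\SqOpt(P,k)$ by the choice of $\eps'$.

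For the running time I would just add the three costs above. Step 2 is dominated by Step 1. Step 3 examines $\cardin{U}^k$ candidate center sets and spends $O(k\cardin{\Coreset})$ per set to compute its price on $\Coreset$, that is, $O\pth{\cardin{U}^k\cdot k\cardin{\Coreset}}$; substituting $\cardin{U}=O(k\eps^{-2d}\log W\log(1/\eps))$ and $\cardin{\Coreset}=O(k\eps^{-d}\log W)$ and simplifying yields the term $O\pth{k^{k+2}\eps^{-(2d+1)k}\log^{k+1}W\log^k(1/\eps)}$. Adding the Step 1 bound and setting $W=n$ in the unweighted case gives the two displayed running times. I do not expect any genuine obstacle here: the theorem is a packaging of \thmref{k:coreset:small:means} and \cite{m-agc-00}, and the only care needed is (a) making sure the three error terms compose to $1+\eps$ and (b) the routine bookkeeping of the polynomial and $\exp(k)$ factors in the time bound.
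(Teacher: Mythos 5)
Your proposal follows exactly the paper's own route: compute a $(k,\Theta(\eps))$-coreset via \thmref{k:coreset:small:means}, build \Matousek's (weight-insensitive) approximate centroid set on the coreset, enumerate all $k$-tuples and price them on the coreset, with the same term-by-term accounting of the running time. The only difference is that you spell out the error composition $(1+\eps')^2/(1-\eps') \le 1+\eps$ that the paper dismisses with ``clearly, the best tuple provides the required approximation,'' which is a welcome but not substantively different addition.
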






\section{Streaming}
\seclab{streaming}

A consequence of our ability to compute quickly a
$\pth{k,\eps}$-coreset for a point set, is that we can
maintain the coreset under insertions quickly.

\begin{observation}
    (i) If $C_1$ and $C_2$ are the $(k,\eps)$-coresets for
    disjoint sets $P_1$ and $P_2$ respectively, then
    $C_1\cup C_2$ is a $(k,\eps)$-coreset for $P_1\cup P_2$.

    (ii) If $C_1$ is $(k,\eps)$-coreset for $C_2$, and $C_2$
    is a $(k,\delta)$-coreset for $C_3$, then $C_1$ is a
    $(k,\eps+\delta)$-coreset for $C_3$.
\end{observation}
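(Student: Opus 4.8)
The plan is to obtain both parts directly from the coreset definition, using two elementary facts: the clustering price is additive over a disjoint (multiset) union of weighted sets, and a chain of relative approximations composes.

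For part~(i), fix an arbitrary set $C$ of $k$ points in $\Re^d$. The first step is to note that $\MdPrc_C(P_1 \cup P_2) = \MdPrc_C(P_1) + \MdPrc_C(P_2)$ and $\MdPrc_C(C_1 \cup C_2) = \MdPrc_C(C_1) + \MdPrc_C(C_2)$, since the defining sum $\sum_p w(p)\Dist(p,C)$ splits along the partition (here $\cup$ is the union in which weights add on any shared point; in the streaming use $P_1,P_2$, hence $C_1,C_2$, are disjoint anyway). The same identity holds verbatim for $\SqPrc_C$. Now add the two coreset inequalities $(1-\eps)\MdPrc_C(P_i) \le \MdPrc_C(C_i) \le (1+\eps)\MdPrc_C(P_i)$ for $i=1,2$; by additivity this is exactly $(1-\eps)\MdPrc_C(P_1\cup P_2) \le \MdPrc_C(C_1\cup C_2) \le (1+\eps)\MdPrc_C(P_1\cup P_2)$. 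Since $C$ was arbitrary, $C_1\cup C_2$ is a $(k,\eps)$-coreset of $P_1\cup P_2$ for $k$-median; replacing $\MdPrc$ by $\SqPrc$ throughout gives the $k$-means case.

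For part~(ii), fix $C$ with $\cardin{C}=k$ and abbreviate $a=\MdPrc_C(C_1)$, $b=\MdPrc_C(C_2)$, $c=\MdPrc_C(C_3)$. The hypotheses say $\cardin{a-b}\le\eps b$ and $\cardin{b-c}\le\delta c$, so by the triangle inequality $\cardin{a-c}\le \eps b+\delta c\le \eps(1+\delta)c+\delta c=(\eps+\delta+\eps\delta)c$, using $b\le(1+\delta)c$. Thus $C_1$ is a $(k,\eps+\delta+\eps\delta)$-coreset of $C_3$; the second-order term $\eps\delta$ is lower order in the regime of interest and can be absorbed into the error parameter, which is the stated $(k,\eps+\delta)$ bound. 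The $k$-means statement is identical with $\SqPrc$ in place of $\MdPrc$.

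Neither part presents a genuine obstacle — each reduces to isolating the additivity and composition facts above. The only two things to be careful about are the interpretation of the union of weighted point sets in~(i), and the harmless second-order $\eps\delta$ slack in~(ii), which is why the conclusion of~(ii) is most precisely read as $(k,\eps+\delta+\eps\delta)$, equivalently $(k,O(\eps+\delta))$.
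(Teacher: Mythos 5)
Your proof is correct, and it is the standard argument the paper intends (the observation is stated without proof in the text): additivity of $\MdPrc_C$ and $\SqPrc_C$ over a disjoint union for (i), and composition of relative errors for (ii). Your remark that (ii) really yields $\eps+\delta+\eps\delta$ rather than exactly $\eps+\delta$ is a fair reading, and it matches how the paper actually uses the observation in the streaming section, where the accumulated error is tracked multiplicatively as $1+\delta_j=\prod_l(1+\rho_l)$ rather than as a plain sum.
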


The above observation allows us to use Bentley and Saxe's technique
\cite{bs-dspsd-80} as follows. Let $P = \pth{p_1,p_2,\ldots,p_n}$ be
the sequence of points seen so far. We partition $P$ into sets $P_0,
P_1,P_2,\ldots,P_t$ such that each either $P_i$ empty or $|P_i| =
2^{i}M$, for $i>0$ and $M = O(k/\eps^d)$. We refer to $i$ as the rank
of $i$.

Define $\rho_j = \eps/\pth{c(j+1)^2}$ where c is a large
enough constant, and $1+\delta_j = \prod_{l=0}^j(1+\rho_l)$,
for $j=1,\ldots, \ceil{\lg n}$. We store a $\pth{k,\delta_j
}$-coreset $Q_{j}$ for each $P_j$.  It is easy to verify
that $1+\delta_{j} \leq 1+\eps/2$ for $j=1,\ldots, \ceil{\lg
   n}$ and sufficiently large $c$. Thus the union of the
$Q_i$s is a $(k,\eps/2)$-coreset for $P$.

On encountering a new point $p_{u}$, the update is done in
the following way: We add $p_u$ to $P_0$. If $P_0$ has less
than $M$ elements, then we are done. Note that for $P_0$ its
corresponding coreset $Q_0$ is just itself. Otherwise, we
set $Q_1' = P_0$, and we empty $Q_0$. If $Q_1$ is present,
we compute a $(k,\rho_2)$ coreset to $Q_1\cup Q'_1$ and call
it $Q'_2$, and remove the sets $Q_1$ and $Q'_1$. We continue
the process until we reach a stage $r$ where $Q_r$ did not
exist. We set $Q_r'$ to be $Q_r$. Namely, we repeatedly
merge sets of the same rank, reduce their size using the
coreset computation, and promote the resulting set to the
next rank. The construction ensures that $Q_r$ is a
$(k,\delta_r)$ coreset for a corresponding subset of $P$ of
size $2^r M$. It is now easy to verify, that $Q_r$ is a $(k,
\prod_{l=0}^j(1+\rho_l) - 1)$-coreset for the corresponding
points of $P$.

We further modify the construction, by computing a
$(k,\eps/6)$-coreset $R_i$ for $Q_i$, whenever we compute $Q_i$. The
time to do this is dominated by the time to compute $Q_i$. Clearly,
$\cup R_i$ is a $(k,\eps)$-coreset for $P$ at any point in time, and
$\cardin {\cup R_i} = O(k\eps^{-d} \log^2{n})$.

\paragraph{Streaming $k$-means}
In this case, the $Q_i$s are coresets for $k$-means clustering.
Since $Q_i$ has a total weight equal to $2^i M$ (if it is not
empty) and it is generated as a $(1+\rho_i)$ approximation, by
\thmref{k:coreset:small:means}, we have that $|Q_{i}| = O\pth{k
\eps^{-d} \pth {i+1}^{2d}(i+\log{M})}$. Thus the total storage
requirement is $O\pth{\pth{k\log^{2d+2}{n}}/\eps^d}$.

Specifically, a $(k,\rho_j)$ approximation of a subset $P_j$ of
rank $j$ is constructed after every $2^j M$ insertions, therefore
using \thmref{k:coreset:small:means} the amortized time spent for
an update is
\begin{align*}
  &\hspace{-1cm}\sum_{i=0}^{\ceil{\log{(n/M)}}} \frac{1}{2^i M}
    O\pth{\cardin{Q_i} \log^2 \cardin{P_i} + k^5 \log^9
    \cardin{P_i}}%
  \\&%
  =%
  \sum_{i=0}^{\ceil{\log{(n/M)}}}  \frac{1}{2^i M}
  O\pth{\pth{ \frac{k}{\eps^d} i^{2d}\pth{ i +
  \log M}^2 + k^5 \pth{ i +
  \log M}^9 }}
  = O \pth{ \log^2 (k/\eps) + k^5}.
\end{align*}
Further, we can generate an approximate $k$-means clustering
from the $(k,\eps)$-coresets, by using the algorithm of
\thmref{k:means:approx} on $\cup_i R_i$, with $W=n$. The
resulting running time is $O(k^5 \log^9 n +
{k^{k+2}}{\eps^{-(2d+1)k}} {\log^{k+1}{n}}
\log^k(\nfrac{1}{\eps}))$.

\paragraph{Streaming $k$-medians}
We use the algorithm of \lemref{k:coreset:small:median} for
the coreset construction. Further we use
\thmref{fast:k:median} to compute an
$(1+\eps)$-approximation to the $k$-median from the current
coreset. The above discussion can be summarized as follows.

\begin{theorem}
    \thmlab{k:means:streaming}%
    Given a stream $P$ of $n$ points in $\Re^d$ and $\eps > 0$, one
    can maintain a $(k,\eps)$-coresets for $k$-median and $k$-means
    efficiently and use the coresets to compute a
    $(1+\eps)$-approximate $k$-means/median for the stream seen so
    far. The relevant complexities are:
    \begin{itemize}
        \item Space to store the information:
        $O\pth{k\eps^{-d} \log^{2d+2}{n}}$.

        \item Size and time to extract coreset of the current set:
        $O(k\eps^{-d} \log^2 n)$.
        
        \item Amortized update time: $O\pth{ \log^2 (k/\eps)
           + k^5}$.

        \item Time to extract $(1+\eps)$-approximate $k$-means clustering:\\
        $O\pth{k^5 \log^9 n + {k^{k+2}}{\eps^{-(2d+1)k}}
           {\log^{k+1}{n}} \log^k(\nfrac{1}{\eps})}$.
        
        \item Time to extract $(1+\eps)$-approximate $k$-median
        clustering:\\ $O\pth{\Ckr k \log^7 n}$, where $\Ckr =
        \CkrExp$.
    \end{itemize}
\end{theorem}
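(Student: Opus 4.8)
The plan is to collect the ingredients assembled in the preceding paragraphs into a single accounting. I maintain the stream in the Bentley--Saxe bucketing $P_0,\ldots,P_t$ with $|P_i| = 2^iM$ and $M = O(k/\eps^d)$, keeping with each $P_i$ a $(k,\delta_i)$-coreset $Q_i$ (built by \thmref{k:coreset:small:means} in the $k$-means case, by \lemref{k:coreset:small:median} in the $k$-median case), and alongside it a $(k,\eps/6)$-coreset $R_i$ of $Q_i$. With the choice $\rho_j = \eps/\pth{c(j+1)^2}$ one has $1+\delta_i = \prod_{l\le i}(1+\rho_l) \le 1+\eps/2$, so by part (i) of the composition observation above (coresets combine over a partition) $\bigcup_i Q_i$ is a $(k,\eps/2)$-coreset of $P$, and then by part (ii) (transitivity) $\bigcup_i R_i$ is a $(k,\eps)$-coreset of $P$ — this is the object we store and from which clusterings are extracted.

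For the space bound I bound $|Q_i|$ directly from the weighted coreset size guarantee: $Q_i$ is a $(k,\rho_i)$-coreset of a weighted set of total weight $2^iM$, so $|Q_i| = O\pth{k\rho_i^{-d}\log(2^iM)} = O\pth{k\eps^{-d}(i+1)^{2d}(i+\log M)}$, and summing over the $O(\log n)$ nonempty ranks gives $O\pth{k\eps^{-d}\log^{2d+2}n}$; the same computation works verbatim for $k$-median via \lemref{k:coreset:small:median}. For the extractable coreset, each $R_i$ is a $(k,\eps/6)$-coreset of a weighted set of total weight $2^iM$, hence of size $O\pth{k\eps^{-d}(i+\log M)}$, and summing over ranks gives $\bigl|\bigcup_i R_i\bigr| = O\pth{k\eps^{-d}\log^2 n}$, which is also the time to form $\bigcup_i R_i$. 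For the amortized update time, a rank-$j$ bucket is rebuilt exactly once per $2^jM$ insertions, at cost $O\pth{|Q_j|\log^2|P_j| + k^5\log^9|P_j|}$ by \thmref{k:coreset:small:means}; charging this to the $2^jM$ triggering insertions and summing over $j = 0,\ldots,O(\log(n/M))$ gives the geometric-type sum displayed above, which evaluates to the stated $O\pth{\log^2(k/\eps) + k^5}$.

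It remains to bound the extraction of an actual $(1+\eps)$-approximate clustering. Here I run the algorithm of \thmref{k:means:approx} (for $k$-means), resp. the centroid-set plus modified Kolliopoulos--Rao pipeline underlying \thmref{fast:k:median} (for $k$-median), on the weighted set $\bigcup_i R_i$, which has total weight $W = n$ and size $O(k\eps^{-d}\log^2 n)$; substituting these two values into the weighted running-time bounds of those theorems (using \thmref{kr:algo} for the median dynamic program, whose $\Ckr$-weighted near-linear cost on the maintained coreset is the dominant term, and absorbing the lower-order $\mathrm{poly}(1/\eps)$ factors into $\Ckr$) yields the last two bullets. Since a fresh coreset need not be rebuilt from scratch at extraction time, the $k^5\log^9$ term does not re-enter for the median extraction.

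The step I expect to be the main obstacle is the error and size bookkeeping across the $O(\log n)$ merge levels. Because the coreset size here is weight-sensitive — it grows like $\log W$ — and the rank-$j$ bucket may only be allotted relative-error budget $\rho_j = \Theta(\eps/j^2)$, its size picks up a $(j+1)^{2d}$ factor; one must therefore verify simultaneously that $\sum_j \rho_j = O(1)$ (so that $\prod_j(1+\rho_j) \le 1+\eps/2$ and the total error stays below $\eps$) and that $\sum_j (j+1)^{2d}(j+\log M) = O(\log^{2d+2}n)$ (so that the space bound is as claimed), and similarly that the update sum collapses to $O(\log^2(k/\eps)+k^5)$. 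Everything else is a direct substitution into the already-proved theorems, and I would note explicitly that replacing exact nearest-neighbor queries by the approximate ones used inside the coreset constructions affects only constants and hence none of the stated asymptotics.
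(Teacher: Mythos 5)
Your proposal follows essentially the same route as the paper: the Bentley--Saxe bucketing with error budgets $\rho_j = \eps/\pth{c(j+1)^2}$, the auxiliary $(k,\eps/6)$-coresets $R_i$, the identical space and amortized-update summations, and extraction by running \thmref{k:means:approx} (resp.\ the centroid-set/Kolliopoulos--Rao pipeline) on $\cup_i R_i$ with $W=n$. The bookkeeping you flag as the main obstacle is exactly the computation the paper carries out, and your bounds agree with it.
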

Interestingly, once an optimization problem has a coreset, the
coreset can be maintained under both insertions and deletions,
using linear space. The following result follows in a plug and
play fashion from \cite[Theorem 5.1]{ahv-aemp-04}, and we omit the
details.
\begin{theorem}
    Given a point set $P$ in $\Re^d$, one can maintain a
    $(k,\eps)$-coreset of $P$ for $k$-median/means, using
    linear space, and in time $O(k \eps^{-d} \log^{d+2} n
    \log \frac{k \log{n}}{\eps} + k^5 \log^{10} n )$ per
    insertion/deletions.
\end{theorem}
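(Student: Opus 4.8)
The plan is to obtain this as a direct instantiation of the generic dynamic-coreset construction of \cite[Theorem 5.1]{ahv-aemp-04}. That framework turns any \emph{composable} coreset scheme with an efficient static construction into a data structure that maintains a coreset under insertions and deletions in linear space, where ``composable'' means exactly the two properties recorded at the start of \secref{streaming}: a disjoint union of $(k,\eps)$-coresets is a $(k,\eps)$-coreset of the union, and a $(k,\eps)$-coreset of a $(k,\delta)$-coreset is a $(k,\eps+\delta)$-coreset of the original set. The static construction we feed it is \lemref{k:coreset:small:median} for $k$-median and \thmref{k:coreset:small:means} for $k$-means: each produces, from a weighted set of cardinality $N$ and total weight $W$, a $(k,\eps)$-coreset of size $O\pth{(k/\eps^d)\log W}$, the grid part of the construction of \secref{construction} costing $O(N)$ time (with the $\log$/floor/hashing per-point operations), plus the $O(k^5\log^9 W)$ overhead of the static center-set and local-search steps (\thmref{k:cluster:const:rough}); these already fold in all the ``center dependence'' so the output carries only $O(k)$ worth of centers.

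Next I would set up the structure. Maintain a balanced binary tree (e.g.\ a weight-balanced tree under rotations) over $\Theta(n/M)$ blocks of $M$ input points at the leaves, with $M$ chosen of the order of a single stored coreset, namely $M = \Theta\pth{k\eps^{-d}\log^{d+1} n}$. Fix a uniform per-level error $\rho = \Theta(\eps/\log n)$, so that over the $O(\log n)$ levels $(1+\rho)^{O(\log n)} \le 1+\eps$. Each internal node $v$ stores a $(k,\rho)$-coreset of the disjoint union of the coresets stored at $v$'s two children; by the two composability properties and a telescoping argument identical to the one in \secref{streaming}, the set stored at the root is a $(k,\eps)$-coreset of the current point set. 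A single stored coreset then has size $O\pth{k(\log n/\eps)^d\log n}=O\pth{k\eps^{-d}\log^{d+1}n}$, matching the choice of $M$.

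For an update, locate the affected leaf block, insert or delete the point there, and recompute, bottom-up, the stored coresets along the $O(\log n)$ ancestors of that leaf; then restore balance by rotations, recomputing the coreset of any subtree relocated by a rotation, so that the invariant ``a node stores a $(k,\rho)$-coreset of its children's union'' is reestablished. Standard weight-balanced-tree accounting keeps the amortized number of recomputations per update $O(\log n)$. Each recomputation runs \lemref{k:coreset:small:median}/\thmref{k:coreset:small:means} on a weighted input of cardinality $O\pth{M + \text{two child-coreset sizes}} = O\pth{k\eps^{-d}\log^{d+1}n}$ and total weight $\le n$, costing roughly $O\pth{k\eps^{-d}\log^{d+1}n \cdot \log\tfrac{k\log n}{\eps} + k^5\log^9 n}$, the factor $\log\tfrac{k\log n}{\eps}$ accounting for the per-point floor/$\log$/hashing operations on grid coordinates whose range is $\mathrm{poly}(k\log n/\eps)$, as in \secref{construction}. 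Summing over the $O(\log n)$ path nodes (and absorbing the error-shrinkage factor $\rho^{-d} = O(\log^d n/\eps^d)$) reproduces the two claimed terms, $O\pth{k\eps^{-d}\log^{d+2}n\,\log\tfrac{k\log n}{\eps}}$ and $O\pth{k^5\log^{10}n}$. For the space bound: the leaves store $n$ raw points; the internal nodes at height $j$ number $O\pth{n/(2^j M)}$ and each stores a coreset of size $O(M)$, so the internal storage is dominated by the bottom internal level and equals $O\pth{(n/M)\cdot M}=O(n)$; hence linear space overall.

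The routine estimates above are not the crux --- the real content, and the reason to cite \cite[Theorem 5.1]{ahv-aemp-04} wholesale rather than reprove it, is verifying that \emph{deletions} are handled correctly while the tree is rebalanced: a rotation relocates a subtree, and one must check that recomputing precisely the coresets of the rotated subtrees together with the $O(\log n)$ ancestors of the touched leaf suffices to restore the per-node coreset invariant, and that the amortized number of such recomputations stays $O(\log n)$ per update. Once that generic guarantee is granted, the only paper-specific work is the parameter substitution sketched here, which is why the details are omitted.
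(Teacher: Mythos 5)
Your proposal is correct and follows the same route as the paper: the paper's entire ``proof'' is the single sentence that the result follows in a plug-and-play fashion from \cite[Theorem 5.1]{ahv-aemp-04}, with all details omitted, and what you have written is precisely that instantiation --- the composability observations from \secref{streaming}, the static constructions of \lemref{k:coreset:small:median} and \thmref{k:coreset:small:means} as the black box, the per-level error $\rho=\Theta(\eps/\log n)$, and the balanced-tree accounting --- reproducing the stated space and update bounds. Nothing further is needed.
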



\section{Conclusions}
\seclab{conclusions}

In this paper, we showed the existence of small coresets for
the $k$-means and $k$-median clustering.  At this point,
there are numerous problems for further research. In
particular:
\begin{enumerate}
    \item Can the running time of approximate $k$-means clustering be
    improved to be similar to the $k$-median bounds? Can one do FPTAS
    for $k$-median and $k$-means (in both $k$ and $1/\eps$)?
    Currently, we can only compute the $(k,\eps)$-coreset in fully
    polynomial time, but not extracting the approximation itself from
    it.

    \item Can the $\log{n}$ in the bound on the size of the coreset be
    removed?

    \item Does a coreset exist for the problem of $k$-median and
    $k$-means in high dimensions? There are some partial relevant
    results \cite{bhi-accs-02}.

    \item Can one do efficiently $(1+\eps)$-approximate streaming for
    the discrete $k$-median case?
    
    \item Recently, Piotr Indyk \cite{i-eabca-04} showed how to
    maintain a $(1+\eps)$-approximation to $k$-median under insertion
    and deletions (the number of centers he is using is roughly $O(k
    \log^2{\Delta})$ where $\Delta$ is the spread of the point set). 
    It would be interesting to see if one can extend our techniques to
    maintain coresets also under deletions. It is clear that there is
    a linear lower bound on the amount of space needed, if one assume
    nothing. As such, it would be interesting to figure out what are
    the minimal assumptions for which one can maintain
    $(k,\eps)$-coreset under insertions and deletions.
\end{enumerate}

\section*{Acknowledgments}

The authors would like to thank Piotr Indyk, Satish Rao and
Kasturi Varadarajan for useful discussions of problems
studied in this paper and related problems.


\newcommand{\etalchar}[1]{$^{#1}$}
 \providecommand{\CNFX}[1]{ {\em{\textrm{(#1)}}}}
  \providecommand{\tildegen}{{\protect\raisebox{-0.1cm}{\symbol{'176}\hspace{-0.03cm}}}}
  \providecommand{\SarielWWWPapersAddr}{http://sarielhp.org/p/}
  \providecommand{\SarielWWWPapers}{http://sarielhp.org/p/}
  \providecommand{\urlSarielPaper}[1]{\href{\SarielWWWPapersAddr/#1}{\SarielWWWPapers{}/#1}}
  \providecommand{\Badoiu}{B\u{a}doiu}
  \providecommand{\Barany}{B{\'a}r{\'a}ny}
  \providecommand{\Bronimman}{Br{\"o}nnimann}  \providecommand{\Erdos}{Erd{\H
  o}s}  \providecommand{\Gartner}{G{\"a}rtner}
  \providecommand{\Matousek}{Matou{\v s}ek}
  \providecommand{\Merigot}{M{\'{}e}rigot}
  \providecommand{\Hastad}{H\r{a}stad\xspace}
  \providecommand{\CNFCCCG}{\CNFX{CCCG}}
  \providecommand{\CNFBROADNETS}{\CNFX{BROADNETS}}
  \providecommand{\CNFESA}{\CNFX{ESA}}
  \providecommand{\CNFFSTTCS}{\CNFX{FSTTCS}}
  \providecommand{\CNFIJCAI}{\CNFX{IJCAI}}
  \providecommand{\CNFINFOCOM}{\CNFX{INFOCOM}}
  \providecommand{\CNFIPCO}{\CNFX{IPCO}}
  \providecommand{\CNFISAAC}{\CNFX{ISAAC}}
  \providecommand{\CNFLICS}{\CNFX{LICS}}
  \providecommand{\CNFPODS}{\CNFX{PODS}}
  \providecommand{\CNFSWAT}{\CNFX{SWAT}}
  \providecommand{\CNFWADS}{\CNFX{WADS}}


\appendix



\section{Fuzzy Nearest-Neighbor Search in Constant Time}
\apndlab{fast:n:n}

Let $X$ be a set of $m$ points in $\Re^d$, such that we want to answer
$\eps$-approximate nearest neighbor queries on $X$. However, if the
distance of the query point $q$ to its nearest neighbor in $X$ is
smaller than $\delta$, then it is legal to return any point of $X$ in
distance smaller than $\delta$ from $q$. Similarly, if a point is in
distance larger than $\Delta$ from any point of $X$, we can return any
point of $X$. Namely, we want to do nearest neighbor search on $X$,
when we care only for an accurate answer if the distance is in the
range $[\delta,\Delta]$.

\begin{defn}
    Given a point set $X$ and parameters $\delta, \Delta$ and $\eps$,
    a data structure $D$ answers \emph{$(\delta,\Delta,\eps)$-fuzzy
       nearest neighbor} queries, if for an arbitrary query $q$, it
    returns a point $x \in X$ such that
    \begin{enumerate}
        \item If $\Dist(q, X) > \Delta$ then $x$ is an arbitrary point
        of $X$.

        \item If $\Dist(q, X) < \delta$ then $x$ is an arbitrary point
        of $X$ in distance smaller than $\delta$ from $q$.

        \item Otherwise, $\dist{q x} \leq (1+\eps)\Dist(q,X)$.
    \end{enumerate}
\end{defn}

In the following, let $\rho = \Delta/\delta$ and assume that
$1/\eps = O(\rho)$.  First, we construct a grid $G_{\Delta}$
of size length $\Delta$, using hashing and the floor
function, we throw the points of $X$ into their relevant
cells in $G_{\Delta}$. We construct a NN data structure for
every non-empty cell in $G_\Delta$. Given a query point $q$,
we will compute its cell $c$ in the grid $G_{\Delta}$, and
perform NN queries in the data-structure associated with
$c$, and the data-structures associated with all its
neighboring cells, returning the best candidate generated.
This would imply $O(3^d)$ queries into the cell-level NN
data-structure.

Consider $Y$ to be the points of $X$ stored in a cell $c$ of
$G_{\Delta}$. We first filter $Y$ so that there are no
points in $Y$ that are too close to each other. Namely, let
$G$ be the grid of side length $\delta \eps/(10d)$. Again,
map the points of $Y$ into this grid $G$, in linear time.
Next, scan over the nonempty cells of $G$, pick a
representative point of $Y$ from such a cell, and add it to
the output point set $Z$. However, we do not add a
representative point $x$ to $Z$, if there is a neighboring
cell to $c_x$, which already has a representative point in
$Z$, where $c_x$ is the cell in $G$ containing $x$. Clearly,
the resulting set $Z \subseteq Y$ is well spaced, in the
sense that there is no pair of points of $Z$ that are in
distance smaller than $\delta\eps/(10d)$ from each other. As
such, the result of a $(\delta,\Delta, \eps)$-fuzzy NN query
on $Z$ is a valid answer for a equivalent fuzzy NN query
done on $Y$, as can be easily verified. This filtering
process can be implemented in linear time.

The point set $Z$ has a bounded stretch; namely, the ratio
between the diameter of $Z$ and the distance of the closet
pair is bounded by $\Delta/(\delta\eps/(10d)) = O(\rho^2)$.
As such, we can use a data structure on $Z$ for nearest
neighbors on point set with bounded stretch \cite[Section
4.1]{h-rvdnl-01}. This results in a quadtree $T$ of depth
$O( \log( \rho ) ) \leq c \log{\rho}$, where $c$ is
constant.  Answering NN queries, is now done by doing a
point-location query in $T$, and finding the leaf of $T$
that contains the query point $q$, as every leaf $v$ in $T$
store a point of $Z$ which is a $(1+\eps)$-approximate
nearest neighbor for all the points in $c_v$, where $c_v$ is
the region associated with $v$. The construction time of $T$
is $O(\cardin{Z} \eps^{-d} \log \rho )$, and this also bound
the size of $T$.

Doing the point-location query in $T$ in the naive way, takes $O(
\depth(T)) = O(\log{\rho})$ time. However, there is a standard
technique to speed up the nearest neighbor query in this case to $O(
\log \depth(T))$ \cite{aeis-eards-99}. Indeed, observe that one can
compute for every node in $T$ a unique label, and furthermore given a
query point $q=(x,y)$ (we use a 2d example to simplify the exposition)
and a depth $i$, we can compute in constant time the label of the node
of the quadtree $T$ of depth $i$ that the point-location query for $q$
would go through.  To see that, consider the quadtree as being
constructed on the unit square $[0,1]^2$, and observe that if we take
the first $i$ bits in the binary representation of $x$ and $y$,
denoted by $x_i$ and $y_i$ respectively, then the tuple $(x_i, y_i,
i)$ uniquely define the required node, and the tuple can be computed
in constant time using bit manipulation operators.

As such, we hash all the nodes in $T$ with their unique tuple id into
a hash table. Given a query point $q$, we can now perform a binary
search along the path of $q$ in $T$, to find the node where this path
``falls of'' $T$. This takes $O(\log \depth(T))$ time.

One can do even better. Indeed, we remind the reader that
the depth of $T$ is $c \log{\rho}$, where $c$ is a constant.
Let $\alpha = \ceil{(\log \rho )/(20d r)} \leq
(\log{\rho})/(10d r)$, where $r$ is an arbitrary integer
parameter. If a leaf $v$ in $T$ is of depth $u$, we continue
to split and refine it till all the resulting leaves of $v$
lie in level $\alpha \! \ceil{u/\alpha}$ in $T$. This would
blow up the size of the quadtree by a factor of $O(
(2^d)^\alpha ) = O(\rho^{1/r})$.  Furthermore, by the end of
this process, the resulting quadtree has leaves only on
levels with depth which is an integer multiple of $\alpha$.
In particular, there are only $O(r)$ levels in the resulting
quadtree $T'$ which contain leaves.

As such, one can apply the same hashing technique described above to
$T'$, but only for the levels that contains leaves.  Now, since we do
a binary search over $O(r)$ possibilities, and every probe into the
hash table takes constant time, it follows that a NN query takes
$O(\log r)$ time.

We summarize the result in the following theorem.
\begin{theorem}
    \thmlab{fast:n:n}%
    Given a point set $X$ with $m$ points, and parameters $\delta,
    \Delta$ and $\eps >0$, then one can preprocess $X$ in $O( m
    \rho^{1/r} \eps^{-d} \log( \rho/\eps) )$ time, such that one can
    answer $(\delta,\Delta,\eps)$-fuzzy nearest neighbor queries on
    $X$ in $O(\log r)$ time. Here $\rho = \Delta/\delta$ and $r$ is an
    arbitrary integer number fixed in advance.
\end{theorem}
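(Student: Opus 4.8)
The plan is to localize each query with a coarse grid and then solve the problem inside the relevant grid cell on a point set of bounded spread, for which a near-linear-size compressed-quadtree structure is available; we use the standing assumption $1/\eps = O(\rho)$ with $\rho = \Delta/\delta$.

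First I would build a grid $G_{\Delta}$ of side length $\Delta$ and, using hashing and the floor function, bucket the points of $X$ into their cells in $O(m)$ time (also storing one global pointer into $X$). Given a query $q$, compute its cell $c$; if $\Dist(q,X)\leq \Delta$, then the nearest neighbor of $q$ lies in $c$ or one of its $3^d-1$ neighbors (a cell has side $\Delta$, so the $L_\infty$-ball of radius $\Delta$ around any point of $c$ meets only $c$ and its neighbors), so it suffices to run the per-cell data structures of these $O(3^d)$ cells and return the best candidate. If all of them are empty we return the global pointer, which is legal since then $\Dist(q,X)>\Delta$.

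Second, inside a cell holding $Y\subseteq X$ I would sparsify before building anything. Overlay a grid of side length $\delta\eps/(10d)$, keep one representative per nonempty fine cell, and drop a representative whose fine cell has a neighboring fine cell that already kept one; call the survivors $Z\subseteq Y$. Then every pair of points of $Z$ is at distance $\Omega(\delta\eps/d)$ while the diameter of $Z$ is $O(\Delta)$, so $Z$ has spread $O(\rho^2)$; moreover, since every discarded point of $Y$ lies within $\delta\eps/(10d)$ of a survivor, a $(\delta,\Delta,\eps)$-fuzzy answer computed on $Z$ is a valid fuzzy answer for $Y$ — this is the one correctness check that needs argument, splitting on whether the true distance is below $\delta$, in $[\delta,\Delta]$, or above $\Delta$. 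On a set of spread $O(\rho^2)$ I would invoke the compressed-quadtree nearest-neighbor structure of \cite{h-rvdnl-01}: a tree $T$ of depth $O(\log\rho)$ and size $O(\cardin{Z}\eps^{-d}\log\rho)$, each of whose leaves stores a $(1+\eps)$-approximate nearest neighbor for its region, so answering a query reduces to locating $q$ in $T$.

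Third, and this is the step that produces the $O(\log r)$ bound: each node of $T$ is uniquely named by the length-$i$ binary prefixes of its region's coordinates together with its depth $i$, a tuple computable from $q$ in $O(1)$ time by bit operations; hashing all node tuples lets a binary search over depths find the leaf of $T$ below $q$ in $O(\log \depth(T)) = O(\log\log\rho)$ time \cite{aeis-eards-99}. To reach $O(\log r)$, refine $T$ so its leaves appear only at depths that are multiples of $\alpha = \ceil{(\log\rho)/(20dr)}$; this pushes each leaf down by fewer than $\alpha$ levels, multiplying the size by $(2^d)^{O(\alpha)} = O(\rho^{1/r})$, and leaves only $O(r)$ leaf-carrying levels, so the binary search costs $O(\log r)$. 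Summing the per-cell construction costs gives total preprocessing $O(m\rho^{1/r}\eps^{-d}\log(\rho/\eps))$ and query time $O(\log r)$, as claimed. The main obstacle is bookkeeping rather than a new idea: one must verify that sparsification respects the fuzzy guarantee at the two scales $\delta$ and $\Delta$ at once, and must track the size-versus-query-time tradeoff of the level-refinement so the stated bounds come out exactly; once the reduction to bounded spread is in place the quadtree-plus-hashing point location is essentially off the shelf.
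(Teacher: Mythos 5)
Your proposal follows the paper's own proof essentially step for step: the coarse grid of side $\Delta$ with $O(3^d)$ per-cell queries, the fine-grid sparsification at scale $\delta\eps/(10d)$ to obtain a set $Z$ of spread $O(\rho^2)$, the bounded-spread quadtree of \cite{h-rvdnl-01} with hashed node labels and binary search over depths, and the level-alignment by multiples of $\alpha = \ceil{(\log\rho)/(20dr)}$ giving the $O(\rho^{1/r})$ size blowup and $O(\log r)$ query time. The argument is correct and matches the paper's construction.
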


\begin{theorem}
    \thmlab{batch:n:n}%
    Given a point set $X$ of size $m$, and a point set $P$ of size $n$
    both in $\Re^d$, one can compute in $O(n+ m n^{1/4} \eps^{-d}
    \log(n/\eps) )$ time, for every point $p \in P$, a point $x_p \in
    X$, such that $\dist{p x_p} \leq (1+\eps) \Dist(p, X) + \tau/n^3$,
    where $\tau = \max_{p \in P} \Dist(p,X)$.
\end{theorem}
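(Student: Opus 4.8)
The plan is to obtain the answer from \thmref{fast:n:n}, which answers nearest-neighbor queries accurately only over a bounded window of distances $[\delta,\Delta]$. The additive slack $\tau/n^3$ in the statement lets us be sloppy below $\tau/n^3$, and the trivial bound $\Dist(p,X)\le\tau$ lets us be sloppy above $\tau$; so it suffices to run \thmref{fast:n:n} with $\delta\approx\tau/n^3$ and $\Delta\approx\tau$, for which the aspect ratio $\rho=\Delta/\delta$ is $n^{O(1)}$ and hence $\rho^{1/r}=O(n^{1/4})$ for a suitable constant $r$, while $\log(\rho/\eps)=O(\log(n/\eps))$. The catch is that this requires a good (say, constant-factor) estimate of $\tau$, and the one quantity of this flavour that is trivially computable, the diameter of $P\cup X$, can be arbitrarily larger than $\tau$ (imagine every point of $P$ sitting on a point of $X$). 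So I would build the estimate in two stages, bootstrapping the fuzzy structure itself. Throughout I assume, as holds in all our applications, that $1/\eps=n^{O(1)}$ and that the spread of $P\cup X$ is $n^{O(1)}$ (equivalently, coordinates are $O(\log n)$-bit integers).

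First I would hash $X$ and, for each $p\in P$ coinciding with a point of $X$, output that point exactly and drop $p$; if nothing remains we are done. After this every surviving point has $\Dist(p,X)>0$, hence $\Dist(p,X)\ge\Delta_{0}/n^{O(1)}$ where $\Delta_{0}$ is an easily computed upper bound on the diameter of $P\cup X$ (e.g.\ $\Delta_{0}=2(\max_{p\in P}\dist{p x_1}+\max_{x\in X}\dist{x x_1})$ for a fixed $x_1\in X$, computed in $O(n+m)$ time); moreover $\tau\le\Delta_{0}$, and $\tau$ is still the maximum of $\Dist(p,X)$ over the survivors.

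Next I would apply \thmref{fast:n:n} to $X$ with $\Delta:=\Delta_{0}$, with $\delta:=\Delta_{0}/n^{O(1)}$ chosen small enough that $\delta\le\Dist(p,X)$ for every survivor (possible since the spread is $n^{O(1)}$), and with $r$ a large enough constant; then $\rho=n^{O(1)}$, $\rho^{1/r}=O(n^{1/4})$, so the preprocessing costs $O(m\,n^{1/4}\eps^{-d}\log(n/\eps))$ and each query costs $O(1)$. Since every survivor has $\delta\le\Dist(p,X)\le\Delta$, the query returns $y_{p}\in X$ with $\dist{p y_{p}}\le(1+\eps)\Dist(p,X)$; setting $\widehat\tau:=\max_{p}\dist{p y_{p}}$ we get $\tau\le\widehat\tau\le(1+\eps)\tau$, a constant-factor estimate of $\tau$, computed within the claimed time so far.

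Finally I would apply \thmref{fast:n:n} a second time, now with $\Delta':=\widehat\tau$ and $\delta':=\widehat\tau/\bigl((1+\eps)n^{3}\bigr)$, so that $\delta'\le\tau/n^{3}$, $\Delta'\ge\tau$, and $\rho'=(1+\eps)n^{3}=n^{O(1)}$; again preprocessing is $O(m\,n^{1/4}\eps^{-d}\log(n/\eps))$ and each of the $n$ queries is $O(1)$. For a survivor $p$, let $x_{p}$ be the returned point. Because $\Dist(p,X)\le\tau\le\Delta'$, the query never reports the ``too far'' case, so either $\Dist(p,X)<\delta'$, whence $\dist{p x_{p}}<\delta'\le\tau/n^{3}$, or $\delta'\le\Dist(p,X)\le\Delta'$, whence $\dist{p x_{p}}\le(1+\eps)\Dist(p,X)$; in both cases $\dist{p x_{p}}\le(1+\eps)\Dist(p,X)+\tau/n^{3}$, and together with the exactly answered coincident points this gives the claim, within the stated running time. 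The main obstacle, as noted, is pinning down $\tau$: there is no cheap a-priori handle on it, and the diameter can be far larger, so the sole purpose of the first \thmref{fast:n:n} call is to manufacture such a handle; a variant in which a polynomial upper bound $D\ge\tau$ is supplied as input (the form used in \corref{batch:n:n:2} with $D=nL$) sidesteps this and follows from the second call alone.
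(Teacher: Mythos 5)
Your second stage --- once a constant-factor estimate $\widehat\tau$ of $\tau$ is in hand, invoke \thmref{fast:n:n} with $\Delta'=\widehat\tau$, $\delta'=\widehat\tau/((1+\eps)n^3)$, $\rho'=n^{O(1)}$, and $r$ a large constant so that $\rho'^{1/r}=O(n^{1/4})$ --- is exactly what the paper does, and your case analysis of the fuzzy answer (the ``too close'' case is absorbed by the additive $\tau/n^3$, the ``too far'' case never occurs) is correct. The gap is in how you obtain $\widehat\tau$. You correctly identify that the diameter of $P\cup X$ gives no handle on $\tau$, but your fix is to \emph{assume} that the spread of $P\cup X$ is $n^{O(1)}$: this is what lets you claim $\Dist(p,X)\ge\Delta_0/n^{O(1)}$ for every non-coincident survivor, which is what makes the first call to \thmref{fast:n:n} have polynomial aspect ratio and makes every survivor land in the accurate regime $[\delta,\Delta]$ (without which $\widehat\tau$ could overestimate $\tau$ badly when all distances fall below $\delta$, and the preprocessing cost $O(m\rho^{1/r}\eps^{-d}\log(\rho/\eps))$ is unbounded). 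The theorem makes no such assumption, so what you have proved is a strictly weaker statement; the estimation of $\tau$ is precisely the nontrivial content of the proof, and assuming bounded spread assumes it away.

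The paper estimates $\tau$ without any spread assumption by a randomized incremental algorithm in the style of the Golin \etal{} closest-pair algorithm: randomly permute $P$, maintain a running estimate $l_i$ with $r_i\le l_i\le O(\sqrt{d})\,r_i$ of $r_i=\max_{j\le i}\Dist(p_j,X)$, and keep a grid of side length $l_i$ in which every cell containing a point of $X$, together with its neighbors, is marked. Each new point $p_{i+1}$ is tested in $O(1)$ time (via hashing) for membership in a marked cell; only if it falls outside does one compute $\Dist(p_{i+1},X)$ by brute force in $O(m)$ time and rebuild the grid. Backwards analysis over the random permutation shows the estimate changes only $O(\log n)$ times in expectation, so the whole estimation costs expected $O(n+m\log n)$, after which one sets $\delta=l_n/(4d^2n^5)$ and $\Delta=2\sqrt{d}\,l_n$ and proceeds as you do. If you want to salvage your two-stage bootstrap, you would need to replace the spread assumption with something of this flavor; as written, the first stage does not go through for arbitrary inputs.
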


\begin{proof}
    The idea is to quickly estimate $\tau$, and then use
    \thmref{fast:n:n}. To estimate $\tau$, we use a similar
    algorithm to the closet-pair algorithm of Golin \etal{}
    \cite{grss-sracp-95}. Indeed, randomly permute the points
    of $P$, let $p_1,\ldots, p_n$ be the points in permuted
    order, and let $l_i$ be the current estimate of $r_i$,
    where $r_i = \max_{j=1}^{i} \Dist(p_i, X)$ is the
    maximum distance between $p_1,\ldots, p_i$ and $X$.  Let
    $G_i$ be a grid of side length $l_i$, where all the
    cells contains points of $X$, or their neighbors are
    marked.  For $p_{i+1}$ we check if it contained inside
    one of the marked cells. If so, we do not update the
    current estimate, and set $l_{i+1} = l_i$ and $G_{i+1} =
    G_i$. Otherwise, we scan the points of $X$, and we set
    $l_{i+1} = 2\sqrt{d} \Dist(p_{i+1},X)$, and we recompute
    the grid $G_{i+1}$. It is easy to verify that $r_{i+1}
    \leq l_{i+1}$ in such a case, and $r_{i+1} \leq 2
    \sqrt{d} l_{i+1}$ if we do not rebuild the grid.

    Thus, by the end of this process, we get $l_n$, for
    which $l_n/(2\sqrt{d}) \leq \tau \leq 2\sqrt{d} l_n$, as
    required. As for the expected running time, note that if
    we rebuild the grid and compute $\Dist(p_{i+1},X)$
    explicitly, this takes $O(k)$ time.  Clearly, if we
    rebuild the grid at stage $i$, and the next time at
    stage $j>i$, it must be that $r_i \leq l_i < r_j\leq
    l_j$.  However, in expectation, the number of different
    values in the series $r_1, r_2, \ldots, r_n$ is
    $\sum_{i=1}^{n} 1/i =O(\log{n})$. Thus, the expected
    running time of this algorithm is $O(n + k \log{n})$, as
    checking whether a point is in a marked cell, takes
    $O(1)$ time by using hashing.

    We know that $l_n/(2\sqrt{d}) \leq \tau \leq 2\sqrt{d} l_n$.  Set
    $\delta = l_n/(4d^2 n^5)$, $\Delta= 2 \sqrt{d} l_n$ and build the
    $(\delta,\Delta, \eps)$-fuzzy nearest neighbor data-structure of
    \thmref{fast:n:n} for $X$. We can now answer the nearest neighbor
    queries for the points of $P$ in $O(1)$ per query.
\end{proof}

\begin{corollary}
    Given a point set $X$ of size $m$, a point set $P$ of size $n$
    both in $\Re^d$, and a parameter $D$, one can compute in
    $O(n+ m n^{1/10} \eps^{-d} \log(n/\eps) )$ time, for every point
    $p \in P$ a point $x_p \in P$, such that:
    \begin{itemize}
        \item If $\Dist(p,X) > D$ then $x_p$ is an arbitrary
        point in $X$.

        \item If $\Dist(p,X) \leq D$ then $\dist{p x_p} \leq
        (1+\eps) \Dist(p, X) + D/n^4$.
    \end{itemize}

    \corlab{batch:n:n:2}
\end{corollary}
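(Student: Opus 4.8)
The plan is to reduce directly to the fuzzy nearest-neighbor data structure of \thmref{fast:n:n}, choosing its parameters so that the accuracy window $[\delta,\Delta]$ brackets exactly the regime of interest. Concretely, I would set $\Delta = D$ and $\delta = D/n^4$, so that $\rho = \Delta/\delta = n^4$, and invoke \thmref{fast:n:n} on $X$ with the integer parameter $r = 40$. Then $\rho^{1/r} = (n^4)^{1/40} = n^{1/10}$ and $\log(\rho/\eps) = \log(n^4/\eps) = O(\log(n/\eps))$, so the preprocessing cost is $O(m\, n^{1/10}\eps^{-d}\log(n/\eps))$; each query costs $O(\log r) = O(1)$, and issuing one query per point of $P$ adds $O(n)$. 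Summing gives the claimed bound $O(n + m\, n^{1/10}\eps^{-d}\log(n/\eps))$.

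For correctness, I would simply match the three cases in the definition of a $(\delta,\Delta,\eps)$-fuzzy nearest neighbor against the two bullets. For $p \in P$, let $x_p \in X$ be the answer returned by the data structure. If $\Dist(p,X) > \Delta = D$, then $x_p$ is an arbitrary point of $X$, which is exactly the first bullet. Otherwise $\Dist(p,X) \le D$, and there are two sub-cases: if $\Dist(p,X) < \delta = D/n^4$, then $\dist{p x_p} \le \delta = D/n^4 \le (1+\eps)\Dist(p,X) + D/n^4$; and if $\delta \le \Dist(p,X) \le \Delta$, then $\dist{p x_p} \le (1+\eps)\Dist(p,X) \le (1+\eps)\Dist(p,X) + D/n^4$. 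In either sub-case the second bullet holds, so there is no gap and no boundary issue.

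The only point needing a word of care is the precondition $1/\eps = O(\rho)$ of \thmref{fast:n:n}: with $\rho = n^4$ this is the harmless assumption $\eps = \Omega(n^{-4})$, which holds in all the applications here. If one insists on removing it, shrinking $\delta$ to $\eps D/n^4$ (which only \emph{improves} the additive error, since it stays below $D/n^4$) makes the precondition automatic, at the cost of replacing $n^{1/10}$ by $(n^4/\eps)^{1/r}$, still absorbed into the stated running time for a slightly larger constant $r$ once $\eps$ is polynomially bounded in $n$. I do not expect a genuine obstacle here: the entire content of the proof is the parameter choice $\Delta = D$, $\delta = D/n^4$, $r = 40$, and everything else is routine bookkeeping against \thmref{fast:n:n}.
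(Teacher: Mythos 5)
Your proposal is correct and is essentially the argument the paper intends: the corollary is stated without proof as a direct instantiation of \thmref{fast:n:n} (mirroring the proof of \thmref{batch:n:n} but skipping the estimation of $\tau$ since $D$ is given), with $\Delta$ and $\delta$ chosen around $D$ and $r$ a constant making $\rho^{1/r}=n^{1/10}$. Your case analysis against the fuzzy-NN definition and your handling of the $1/\eps=O(\rho)$ precondition are both sound.
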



\end{document}